\newcommand{\footnotesc}[1]{\footnote{\scriptsize #1}}
\newcounter{MYtempeqncnt}
\newcommand{\be}{\begin{itemize}} \newcommand{\ee}{\end{itemize}}
\newtheorem{lemma}{Lemma}
\newtheorem{definition}{Definition}
\newcommand{\rev}[1]{{\color{black}#1}} %revise of the text
\newcommand\figcaption{\def\@captype{figure}\caption}
\newcommand\tabcaption{\def\@captype{table}\caption}
\newcommand{\mytab}{% Just for this example
	\scriptsize
	\begin{tabular}[b]{p{0.19cm}@{\hskip 0.15cm}p{0.31cm}@{\hskip 0.116cm}p{0.31cm}@{\hskip 0.116cm}p{0.31cm}@{\hskip 0.116cm}p{0.31cm}@{\hskip 0.116cm}p{0.31cm}@{\hskip 0.116cm}p{0.31cm}}
		\toprule
		   & \textbf{1} & \textbf{2} & \textbf{3} & \textbf{4} & \textbf{5} & \textbf{6} \\
		\midrule
		\scriptsize
		\textbf{1} & 3.4 & 4.7 & .6 & 0 & 0 & .99 \\
		\textbf{2} & 0 & 0 & 0 & 0 & 0 & 0 \\
		\textbf{3} & 0 & 0 & 1 & 0 & 0 & 0 \\
		\textbf{4} & 0 & 0 & 0 & 1 & 0 & 0 \\
		\textbf{5} & 0 & 0 & .18 & 1.2 & 2.7 & 0 \\
		\textbf{6} & 0 & 0 & .46 & 0 & 0 & 1.6 \\
		\bottomrule
	\end{tabular}
}
\def\eq{\triangleq}
\def\dii{\mbox{...}}
\def\I{\mathcal{I}}
\begin{document}
%
% paper title
% can use linebreaks \\ within to get better formatting as desired
% Do not put math or special symbols in the title.
\title{Efficient and Fair Collaborative Mobile Internet Access}
%
%
% author names and IEEE memberships
% note positions of commas and nonbreaking spaces ( ~ ) LaTeX will not break
% a structure at a ~ so this keeps an author's name from being broken across
% two lines.
% use \thanks{} to gain access to the first footnote area
% a separate \thanks must be used for each paragraph as LaTeX2e's \thanks
% was not built to handle multiple paragraphs
%

\def\M{\mathcal{M}}
\def\I{\mathcal{I}}
\def\d{\mathrm{d}}

\def\eq{\triangleq}

\author{George~Iosifidis, Lin~Gao, Jianwei~Huang, and~Leandros~Tassiulas% <-this % stops a space
\thanks{George Iosifidis is with the School of Computer Science and Statistics, Trinity College Dublin, and the telecommunications research center CONNECT, Ireland. Lin Gao is with the College of Electronic and Information Engineering, Harbin Institute of Technology, Shenzhen, China. Jianwei Huang is with the Department of Information Engineering, The Chinese University of Hong Kong, Hong Kong. Leandros Tassiulas is with Yale University, Electrical Engineering Department and Institute for Network Science, USA. Parts of this work appeared in Proc. of IEEE Infocom 2014~\cite{opengarden-infocom}.}% 
}

% note the % following the last \IEEEmembership and also \thanks -
% these prevent an unwanted space from occurring between the last author name
% and the end of the author line. i.e., if you had this:
%
% \author{....lastname \thanks{...} \thanks{...} }
%                     ^------------^------------^----Do not want these spaces!
%
% a space would be appended to the last name and could cause every name on that
% line to be shifted left slightly. This is one of those "LaTeX things". For
% instance, "\textbf{A} \textbf{B}" will typeset as "A B" not "AB". To get
% "AB" then you have to do: "\textbf{A}\textbf{B}"
% \thanks is no different in this regard, so shield the last } of each \thanks
% that ends a line with a % and do not let a space in before the next \thanks.
% Spaces after \IEEEmembership other than the last one are OK (and needed) as
% you are supposed to have spaces between the names. For what it is worth,
% this is a minor point as most people would not even notice if the said evil
% space somehow managed to creep in.

% The paper headers
\markboth{IEEE/ACM Transactions on Networking}%
{Iosifidis \MakeLowercase{\textit{et al.}}}
% The only time the second header will appear is for the odd numbered pages
% after the title page when using the twoside option.
%
% *** Note that you probably will NOT want to include the author's ***
% *** name in the headers of peer review papers.                   ***
% You can use \ifCLASSOPTIONpeerreview for conditional compilation here if
% you desire.

% If you want to put a publisher's ID mark on the page you can do it like
% this:
%\IEEEpubid{0000--0000/00\$00.00~\copyright~2012 IEEE}
% Remember, if you use this you must call \IEEEpubidadjcol in the second
% column for its text to clear the IEEEpubid mark.

% make the title area
\maketitle

\begin{abstract}
The surging global mobile data traffic challenges the economic viability of cellular networks and calls for innovative solutions to reduce the network congestion and improve user experience. In this context, user-provided networks (UPNs), where mobile users share their Internet access by exploiting their diverse network resources and needs, turn out to be very promising. Heterogeneous users with advanced handheld devices can form connections in a distributed fashion and unleash dormant network resources at the network edge. However, the success of such services heavily depends on users' willingness to contribute their resources, such as network access and device battery energy. In this paper, we introduce a general framework for UPN services and design a bargaining-based distributed incentive mechanism to ensure users' participation. The proposed mechanism determines the resources that each user should contribute in order to maximize the aggregate data rate in UPN, and fairly allocate the benefit among the users. The numerical results verify that the service can always improve users' performance, and such improvement increases with the diversity of the users' resources. Quantitatively, it can reach an average $30\%$ increase of the total served traffic for a typical scenario even with only $6$ mobile users.

%ensures that the contribution of user resources, in terms of Internet access bandwidths and battery energy, and the allocation of service capacity, measured in the delivered mobile data, are Pareto efficient and proportionally fair.

\end{abstract}

% Note that keywords are not normally used for peerreview papers.
\begin{IEEEkeywords}
Network Economics, Network Optimization, Nash Bargaining, \rev{Fog Computing, User-Provided Networks}
\end{IEEEkeywords}

% For peer review papers, you can put extra information on the cover
% page as needed:
% \ifCLASSOPTIONpeerreview
% \begin{center} \bfseries EDICS Category: 3-BBND \end{center}
% \fi
%
% For peerreview papers, this IEEEtran command inserts a page break and
% creates the second title. It will be ignored for other modes.
\IEEEpeerreviewmaketitle

\section{Introduction}

\subsection{Background and Motivations}

According to several recent industry reports \cite{cisco-2012}, \cite{ericsson}, mobile data is expected to increase with an annual growth rate of $60\%$ in the next several years, reaching 25 exabytes per month in 2020. This surging traffic places an unprecedented strain on cellular networks, which need to substantially expand their capacities. However, it is clear that the traditional capacity increase strategies of mobile network operators (MNOs), such as acquiring more spectrum or deploying additional network infrastructure, are often time-consuming, costly, and eventually inadequate to accommodate the traffic growth. Therefore, MNOs often end up offering services of low quality \cite{NYTimes-anger}, or charging their subscribers very expensive usage-based fees \cite{itu-report}. This means that a large number of mobile users do not have access to the low-cost and high-speed mobile Internet, and hence there are significant user dissatisfactions and frequent user churns. This leads to the growing consensus that more disruptive methods and forward-looking solutions are needed to resolve the growing gap between data supply and demand.

At the same time, recent technological advancements have resulted in sophisticated user handheld equipments, such as smartphones with Wi-Fi (802.11) and Bluetooth (802.15.1) interfaces, 4G chipsets supporting cellular connections up to 150Mbps \cite{qualcom}, and high-end processors that can execute complicated networking tasks. However, the conventional approach of using these devices as simple transceivers, which are completely controlled by the cellular base stations to serve only the needs of their owners, does not fully exploit their communication and computational capabilities. Clearly, these devices can  also offer communication services to nearby users, by acting as mobile Wi-Fi hotspots or relays. In other words, it is to transform users to local \emph{micro-operators}, serving each other's needs. This leads to the so-called user-provided networks (UPNs) \cite{Sofia-UPC}, \cite{gios-com-mag}, \cite{gios-hybrid}, \cite{gios-wiopt} which constitute a promising solution for alleviating network congestion, reducing network access costs, and improving the user satisfaction by enabling network control at the edge of the network.

\begin{figure}[t]
\begin{center}
\epsfig{figure=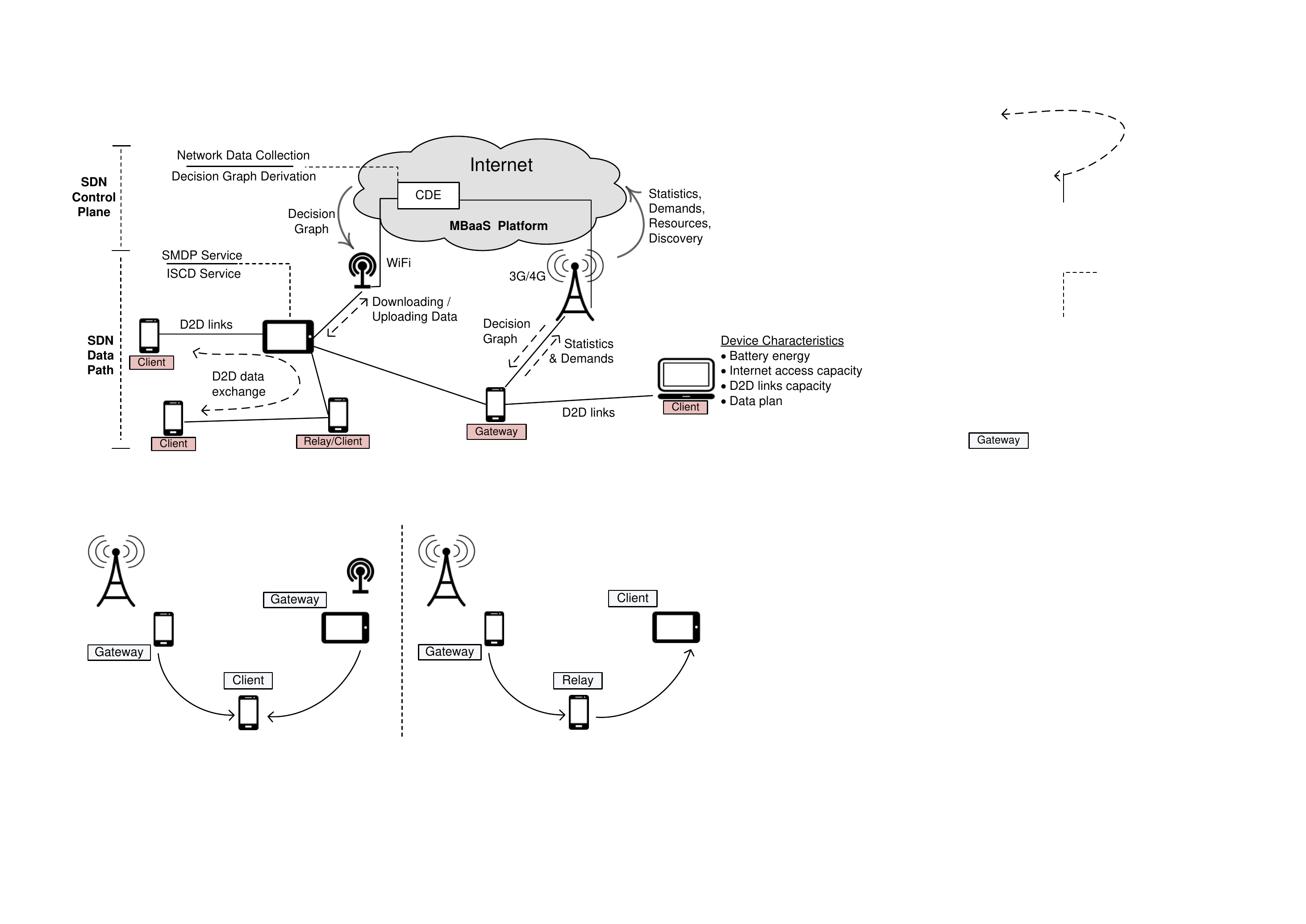, width=8.6cm,height=3.4cm}
\vspace{-1mm}
\caption{An example of users interactions in UPNs. Each user can concurrently consume data from multiple gateways, over multiple and possibly multi-hop paths, and serve as a relay or even a gateway for others. Left: concurrent downloading from two gateways. Right: multi-hop connection to Internet. }
\label{fig:system-model}
\vspace{-3mm}
\end{center}
\end{figure}

One of the first UPN services is FON \cite{fon}, a community-based Wi-Fi Internet access scheme, where roaming FON users can access the Internet through the home Wi-Fi connections of other nearby FON users. Different from FON which utilizes fixed user equipments (Wi-Fi access points), there is an emerging trend of \emph{mobile UPNs}, which focus on leveraging the capabilities of handheld mobile devices \cite{karma}, \cite{opengarden}, \cite{m-87}. For example, Open Garden \cite{opengarden} offers a mobile software which enables mobile devices to connect with each other through Wi-Fi direct \cite{wifi-direct} or Bluetooth \cite{bluetooth}, and share their Internet connections. This solution creates a mesh network, where each user (device) may act as a \emph{client} node (consuming data), a \emph{relay} (relaying data to other nodes), or a \emph{gateway} (connecting the mesh overlay with Internet), as illustrated in Figure \ref{fig:system-model}.

In a nutshell, mobile UPN services aim to crowdsource Internet access by allowing users to collaboratively consume their Internet connections and battery energy. The key idea is to turn the negative externality of network congestion to the positive network effect, by exploiting the diversity of resources and demands of different users. Recent measurement-based studies \cite{laoutaris-conext14} have revealed the large benefits of these sharing mechanisms. \emph{However, the success of such services heavily depends on (i) the willingness of users to join the service and contribute their resources, and (ii) the efficient allocation of the aggregated capacity}. Clearly, a user with low battery energy and fast Internet connection may not be willing to serve other users, unless this improves her satisfaction level. Therefore, it is of important to design a \emph{resource sharing mechanism} for properly incentivizing user participation.

Such an incentive mechanism (currently missing from \cite{opengarden} and similar services) should encourage users to collaborate, and lead to a proper data transmission and routing scheme that balances efficiency and fairness. The \emph{efficiency} is quantified in terms of the aggregate throughput which can be maximized by having users with the highest Internet connection capacities serve as gateways, and by properly scheduling the traffic to avoid heavy interferences among users. The \emph{fairness} criterion, on the other hand, concerns the relationship among data delivery, resource contribution, and economic gains/losses of each user. If a user experiences excessive unfair consumption of her resources she will probably leave the service and hence deteriorate the performance experienced by other users.

Nevertheless, the design of this mechanism is very challenging. First of all, there is often no central entity controlling such a wireless mesh network with devices belonging to heterogeneous physical networks, and each user only has information regarding her own needs and resources. Therefore, the proposed scheme has to be \emph{distributed}. Moreover, the fairness criterion should consider that users have different needs and may contribute different resources with different costs. At the same time, this criterion should take into account that a user will participate in the service only if she expects to improve over her \emph{standalone} performance, i.e., the one she achieves when acting independently.

\vspace{-2mm}
\subsection{Methodology and Contributions}

%\textbf{Methodology and Contributions}. 

We introduce a detailed framework for the UPN service, which is modeled as a multi-hop, multi-path mesh network that manages multiple unicast sessions between the Internet and different users. Each device may have one or more network interface cards (NICs), which enable it to transmit/receive over one or more frequency channels. Each user is parameterized by her Internet connection capacity (through cellular or Wi-Fi connections), her available battery energy, the monetary cost for downloading and uploading data (based on the user's pricing scheme or \emph{data plan}), and her relaying capabilities. Finally, we employ user-specific utility and cost functions to account for the communication needs and energy consumption aversion, respectively, that may vary across the users.

\begin{figure}[t]
\begin{center}
\epsfig{figure=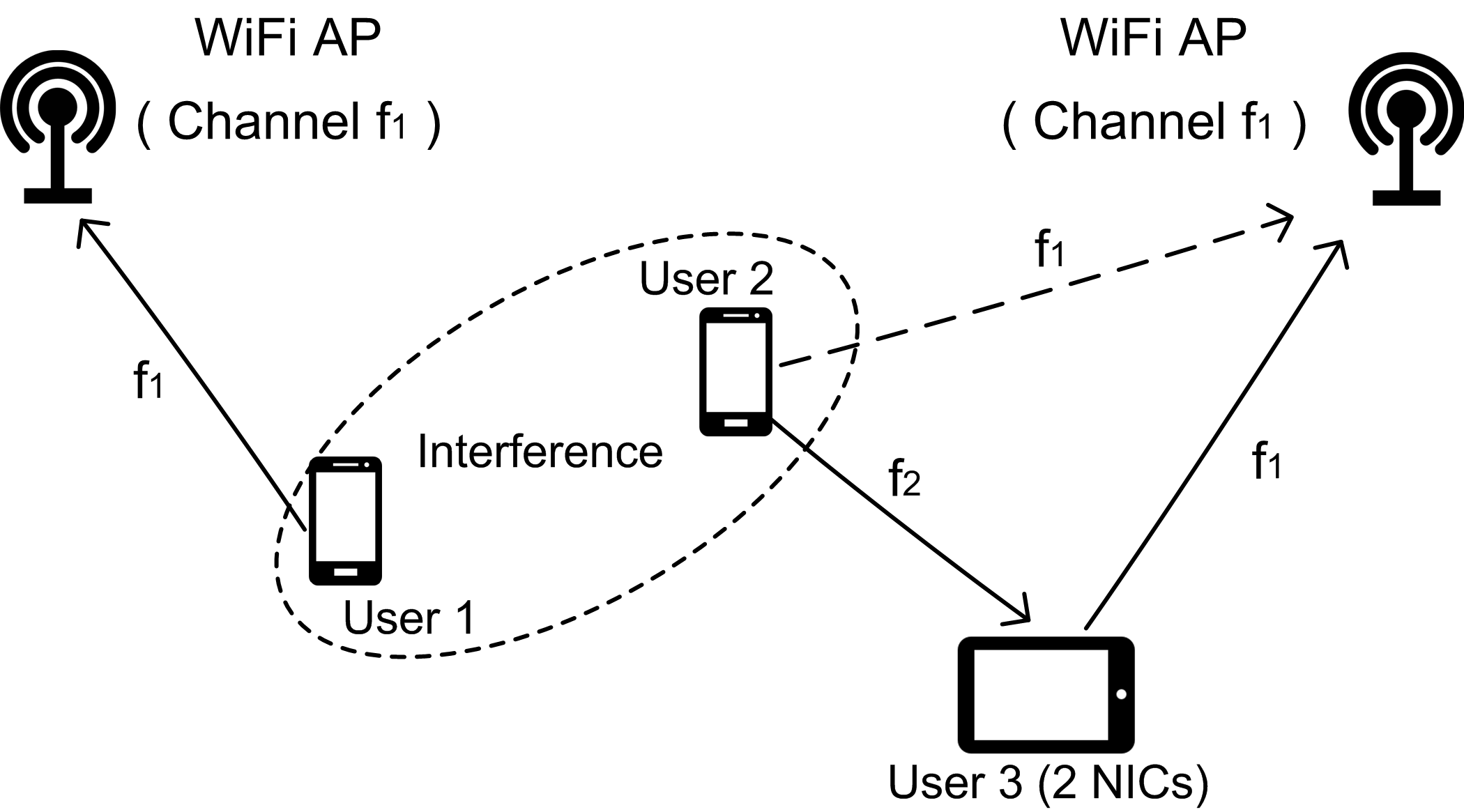, width=6.1cm,height=3.4cm}
\vspace{-2mm}
\caption{The proposed service can be used to mitigate interference by proper channel assignments. In this example, both access points transmit in the same channel (with a frequency $f_1$), hence close-by users ($1$ and $2$) will interfere with each other if they directly communicate with the access points (dashed arrow for user 2). The UPN service can exploit the two NICs of user $3$ (who is further away from user $1$) for relaying the traffic of user 2 through a different channel, thus reducing the interference for user $1$.}
\label{fig:interference}
\end{center}
\vspace{-3mm}
\end{figure}

We use game theory, and specifically the Nash bargaining solution (NBS) concept \cite{nash-bargaining}, to determine the efficient and fair contribution of the user resources and allocation of the service capacity to each user. The NBS yields an outcome which is Pareto efficient and fair \cite{myerson-gametheory}, and hence is self-enforcing, i.e., acceptable by all users. A particularly important feature of the NBS rule is that it takes into account the disagreement performance of each player, i.e., the utility she perceives when an agreement is not reached. This means that the NBS ensures that each user participating in the UPN service will receive a performance at least as good as her standalone performance.

\rev{We further introduce a \emph{virtual currency} system to facilitate the users' cooperation and address the inherent incentive issue. Such a system allows the users to cooperate not only by direct service exchanges (e.g., relaying data for each other) but also by using this currency to pay for the services that they receive (e.g., exchange currency with relayed data)}. This encourages users to participate and serve other users so as to collect currency, even if they currently do not have communication needs (but can use the currency later when they have needs). Similarly, it enables users with poor Internet connections to utilize the service by paying other users. Clearly, this virtual currency system addresses the problem of \emph{double coincidence of needs} \cite{jevons}, therefore increases the number of users who are able to cooperate with each other.

We propose an algorithm that computes the NBS in a distributed fashion, thus enabling the decentralized implementation of the resource sharing mechanism. This is highly non-trivial mathematically, since the respective optimization problem has both a coupled constraint set and a coupled objective function. The obtained solution determines the amount of each resource (Internet access capacity, wireless bandwidth, and battery energy) that each user should contribute, the amount of data that she will be served in return, and the virtual currency transfers. Moreover, it describes how this will be achieved, i.e., how much traffic will be conveyed over each link, which channels will be used for the users' communications, and which Internet connections will be utilized. \rev{Such a holistic methodology enables more than two users to collaborate over multihop paths, and therefore outperforms prior bilateral-only cooperation schemes.} Besides, this joint design and coordination of the users can mitigate interference among users through proper channel reallocations, as shown in Figure \ref{fig:interference}.

Additionally, the proposed service model goes beyond typical self-organized networks or device-to-device communications, as it is capable of connecting multiple heterogeneous Internet access networks. In particular, it can \emph{offload} cellular traffic to Wi-Fi networks \cite{iosifidis-wiopt} or \emph{onload} Wi-Fi traffic to cellular networks \cite{laoutaris-onloading}. These scenarios are illustsrated in Figure \ref{fig:offloading-onloading}. The optimal strategy depends on congestion levels of the Internet connections and the data access costs. More interestingly, these decisions are made by users in a distributed fashion and without the intervention of network operators.

To this end, the main technical contributions are as follows:
\begin{itemize}

\item \emph{Analytical Model}: We introduce a general mobile UPN service that incorporates users' communication needs, monetary costs, and energy consumption, which are the key factors affecting users' servicing decisions.

 \item \emph{Incentive Provision \& Service Allocation Mechanism}: We design a resource sharing mechanism based on the Nash bargaining solution, that induces users' participations through fair allocation of the contributed resources. It is Pareto efficient and takes into account users' standalone performances. These aspects are very crucial to maintain a good performance of the service.

\item \emph{Distributed Algorithm}: We propose a distributed algorithm, which combines the concepts of consistency pricing \cite{chiang-tutorial} and primal-dual Lagrange relaxation \cite{bertsekas-nedic}, and achieves the unique NBS. This enables the decentralized implementation of the service, without requiring central coordination or additional infrastructure.

\item \emph{Intelligence-at-the-edge}: We discuss how the service can account for interference and congestion effects, by taking intelligent (at-the-edge) channel assignment, routing, and flow control decisions. We explain how the service can be used both for mobile data offloading and onloading, adapting on the congestion levels of the different networks as well as the Internet access costs.

\item \emph{Performance Evaluation}: We evaluate the performance of the service for various system parameters and scenarios. We find that the benefits increase as users become more heterogeneous (diverse) in terms of their needs and resources, increasing on average by at least $30\%$ the amount of served data in a typical scenario with $6$ users.

\end{itemize}

The rest of the paper is organized as follows. Section \ref{sec:model} introduces the system model. Section \ref{sec:problem} analyzes the user decisions and provides the problem statement. In Section \ref{sec:mechanism}, we present the Nash bargaining formulation, and in Section \ref{sec:distributed} we provide the algorithm for its distributed solution. We present numerical results in Section \ref{sec:numerical}. Finally, we analyze related works in Section \ref{sec:RelatedWork}, and conclude in Section \ref{sec:conclusions}.

%\input{OG_Section_Journal_Model11}
 %\vspace{-1mm}
\section{Model}\label{sec:model}

\subsection{Basics of the UPN Model}

\textbf{Network Graph}. We consider a set of mobile users $\mathcal{I}=\{1,2,\ldots,I\}$, who are interested in providing a crowdsourced Internet access service (hereafter referred to as \emph{service}) to each other for a time period $\mathcal{T}$ (e.g., several minutes). The users create a mesh network that is described by a directed graph $G=(\mathcal{I},\mathcal{E}, \mathcal{B})$. Here, $\mathcal{E}$ denotes the set of communication links that can carry data between the nodes (or, users), and $\mathcal{B}$ denotes the set of ``interference links''. If a link $(i,j)\in\mathcal{E}$, then node $i$ can transmit data to $j$. If a link $(i,j)\in\mathcal{B}$, then nodes $i$ and $j$ are not in communication range but still their concurrent transmissions (independent of their transmission destinations) interfere with each other\footnotesc{{The \emph{interference link} is based on \cite{kodialam} and is consistent with the typical interference range approach: an interference link exists among two nodes whenever one is within the interference range of the other. For simplicity, we assume a symmetric relationship.}}. This is possible since in 802.11 the carrier sense range is larger than the transmission range \cite{kodialam}.

We assume that the graph is connected, i.e., any two nodes can communicate (e.g., for exchanging control messages) through a (possibly multi-hop) path along the communication links. We also define the sets
\begin{equation}
In(i)=\{j:(j,i)\in \mathcal{E}\},\,\,Out(i)=\{j:(i,j)\in\mathcal{E}\},
\end{equation}
for the upstream and the downstream one-hop neighbors of node $i$, respectively. Similarly, we introduce the extended sets $In^{e}(i)$ and $Out^{e}(i)$ that include the interference links as well, e.g., $In^{e}(i)=\{j:(i,j)\in\mathcal{E}\cup\mathcal{B}\}$. {Finally, we define the set of neighbors for $i$, $\mathcal{N}(i)=In(i)\cup Out(i)$ and the respective extended neighbor set $\mathcal{N}^{e}(i)=In^{e}(i)\cup Out^{e}(i)$.}

\begin{figure}[t]
	\begin{center}
		\epsfig{figure=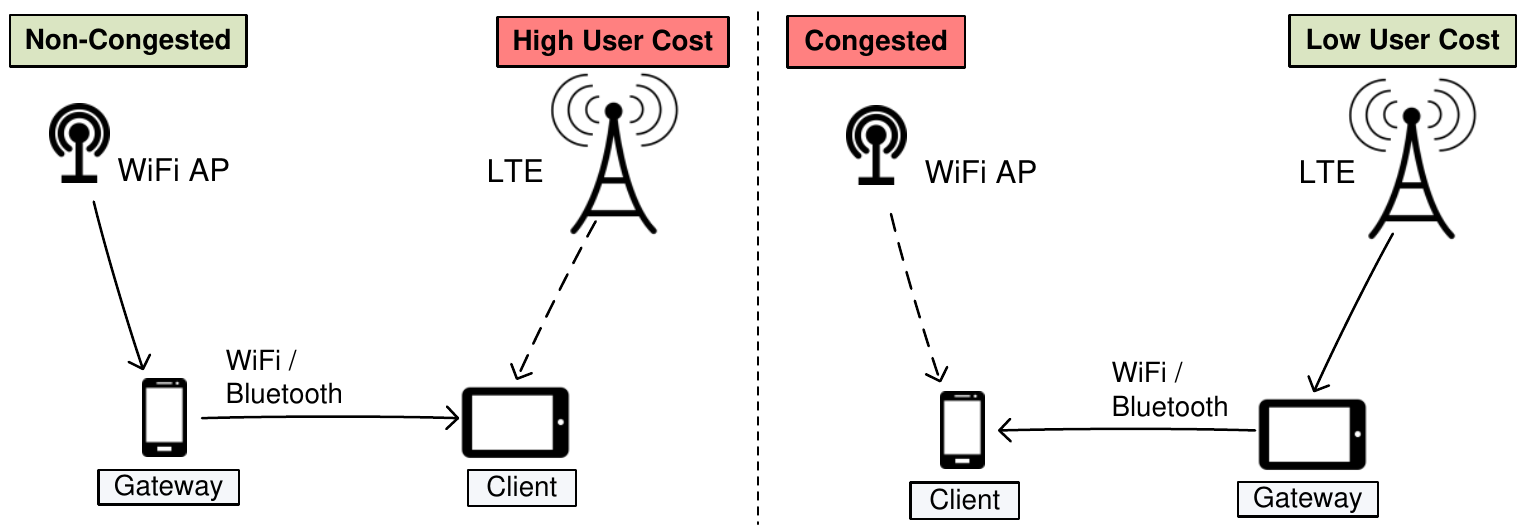, width=8.2cm,height=3.3cm}
		%%%%%%%\vspace{-3mm}
		\caption{The proposed service can be used to \emph{offload} cellular traffic to Wi-Fi access points (left), or \emph{onload} Wi-Fi traffic to cellular networks (right). The offloading decisions depend mainly on the data usage (and energy) cost incurred by the users, while the onloading decisions depend on the congestion of Wi-Fi APs. Here the dashed arrow means the data downloading choice before collaboration, and the solid arrow means the data downloading choice after collaboration.}
		\label{fig:offloading-onloading}
	\end{center}
	\vspace{-2mm}
\end{figure}

\textbf{Channels}. Each node can access the Internet using multiple Internet connections simultaneously, for example through both cellular and Wi-Fi connections. This is consistent with the recent technology development (e.g., see \cite{keown}, \cite{syrivelis}). Our model also allows the more restricted case where each gateway node can only use the best of her multiple Internet connections.

There is a set $\mathcal{F}$ of unlicensed channels (Bluetooth or Wi-Fi) available for each direct link among the users\footnotesc{Note that the Wi-Fi Internet access channel is determined by the Wi-Fi AP instead of by users participating in the UPN. Hence, it may or may not coincide with one of the orthogonal channels in $\mathcal{F}$. For the more general model, we assume that this channel is one of the elements in set $\mathcal{F}$.}. For example, in the IEEE 802.11 family of standards \cite{wifi-direct}, there are $3$ orthogonal channels in the 2.4GHz band and 12 channels in the 5GHz band. Each communication link can utilize any one (and only one) of these channels $f\in\mathcal{F}$ at any given time. Note that set $\mathcal{F}$ does \emph{not} include the licensed cellular channel which we denote as $c$. Each node $i\in\mathcal{I}$ has $k_i$ network interface cards (NICs) which allow the concurrent utilization of up to $k_i$ different channels on $k_i$ communication links connected to this node. This includes the special case of $k_i=1$, where a node has only one Wi-FI NIC (besides her capability of accessing the Internet through the cellular channel $c$). Similarly to \cite{kodialam}, \cite{XLin}, we assume that the devices can change channels very fast without significant overheads (e.g., in energy consumption).

\textbf{Link Capacities and Commodities}. We denote with $C_{i}^{c}$ and $C_{i}^{f}$ the Internet access capacity for user $i\in\mathcal{I}$ when she employs her cellular connection $c$, or Wi-Fi connection choosing channel $f\in\mathcal{F}$, respectively. Also, let $C_{ij}^{f}\geq 0$ denote the capacity of link $(i,j)\in\mathcal{E}$ when choosing channel $f\in\mathcal{F}$ assuming no other interference links are active at the same time. In general, we allow to have $C_{ij}^{f_1}\neq C_{ij}^{f_2}$ for $f_1,f_2\in\mathcal{F}$, so as to take into account possible channel diversity. We assume that the users are nomadic (not very fast moving), and the time period length much larger than the channel coherent time. This means that we can use proper coding techniques to deal with the fast small scale fading, therefore the link capacities are considered constant during $\mathcal{T}$. 

\vspace{-2mm}
\subsection{Network Constraints, Routing and Internet Access}

In order to simplify the presentation, we will focus on the scenario where users download data from the Internet. Our analysis can be easily extended for the uploading scenario with more complicated notations. \rev{During period $\mathcal{T}$, each user can have more than one roles if needed: she can be a \emph{client} node (consuming data), a \emph{relay} node (routing data to/from other users), and/or a \emph{gateway} node (downloading data from the Internet)}. Therefore, there exist at most $|\mathcal{I}|$ data commodities in the UPN, where each commodity $n\in\mathcal{I}$ corresponds to a (potentially multi-hop) unicast session originating from the Internet (e.g., a content server) and ending at a user $n\in\mathcal{I}$.

\textbf{Scheduling and Network Constraints}. We assume that the time period $\mathcal{T}=\{1,2,\ldots,T\}$ is divided in $T$ equal length time slots $t=1,2,\ldots,T$, with each slot having a normalized length of $1$. In each slot, \rev{every node $i\in\mathcal{I}$ can potentially update her strategy and decide whether she is going to use any of her outgoing links to transmit data, or receive data from her upstream neighbors, or access the Internet through the cellular or a Wi-Fi network}. If the node has multiple NICs, she can choose multiple choices at the same time. Namely, we use the binary variable $\widehat{x}_{ij}^{f}[t]\in\{0,1\}$ to denote node $i$'s decision of whether transmitting data (of any commodity) to node $j\in Out(i)$ during slot $t$ in channel $f$. We further use binary variables $\widehat{y}_{i}^{c}[t]\in\{0,1\}$ and $\widehat{y}_{i}^{f}[t]\in\{0,1\}$ to denote user $i$'s decisions of whether downloading data from the Internet during $t$, through the cellular connection or the Wi-Fi channel $f\in\mathcal{F}$, respectively.

These decisions are subject to certain constraints. The first condition that should be satisfied is the \emph{link-channel} constraint: for each link $(i,j)\in\mathcal{E}$ and each Wi-Fi Internet link, only one channel can be employed in each slot $t$ at most, i.e.,
\begin{align}
\sum_{f\in\mathcal{F}}\widehat{x}_{ij}^{f}[t]\leq 1,\,\,\, \forall\,(i,j)\in\mathcal{E},\,\,\forall\, t\in\mathcal{T}, \label{eq:neces-1-1} \\
\sum_{f\in\mathcal{F}}\widehat{y}_{i}^{f}[t]\leq 1,\,\,\, \forall\,i\in\mathcal{I},\forall\, t\in\mathcal{T}\,. \label{eq:neces-1-2}
\end{align}
\noindent Second, the data amounts that each node $i\in\mathcal{I}$ can send or receive from her neighbors or from Internet, in all the available channels, are constrained by the total number of her network interface cards. Therefore, the following \emph{node constraint} should hold for every node $i\in\mathcal{I}$ and slot $t\in\mathcal{T}$:
\begin{equation}
\sum_{f\in\mathcal{F}}\left( \widehat{y}_{i}^{f}[t] + \sum_{k\in In(i)}\widehat{x}_{ki}^{f}[t]+\sum_{j\in Out(i)}\widehat{x}_{ij}^{f}[t] \right)\leq k_i,\, \label{eq:neces-2}
\end{equation}
which does not include the cellular transmissions since they use a different NIC.

Moreover, the transmissions of the different nodes are coupled due to interference. In particular, based on the protocol interference model \cite{kodialam}, \cite{gupta}, we assume that a transmission over link $(i,j)\in\mathcal{E}$ in $t$ is successful only if all nodes that are connected to $i$ or $j$ through an interference or communication link are idle, i.e., do not transmit or receive data during $t$. To facilitate modeling, we define the set of all interfering links:
\begin{align}
I(i,j)=&\{ (c,b),(a,c): c\in \mathcal{N}^{e}(i)\cup \mathcal{N}^{e}(j), \nonumber \\
&b\in Out(c),\, a\in In(c) \},\,\forall (i,j)\in\mathcal{E}\,. \nonumber
\end{align}
%Therefore, for each link $(i,j)\in\mathcal{E}$, and each channel $f\in\mathcal{F}$, it should hold in every slot $T\in\mathcal{T}$:
%\begin{align}
%\widehat{x}_{ij}^{f}[T]+\sum_{(k,m)\in I(i,j)}\widehat{x}_{km}^{f}[T] + \sum_{k\in \mathcal{N}^{e}(i)\cup \mathcal{N}^{e}(j)}\widehat{y}_{k}^{f}[T]    \leq\ 1 \label{eq:neces-3}
%\end{align}
Therefore we have the set of interference constraints: %, for each link $(i,j)\in\mathcal{E}$, and each channel $f\in\mathcal{F}$, it should hold in every slot $T\in\mathcal{T}$:
\begin{align}
&\widehat{x}_{ij}^{f}[t]+\sum_{(k,m)\in I(i,j)}\widehat{x}_{km}^{f}[t]\, + \nonumber \\
&\sum_{k\in \mathcal{N}^{e}(i)\cup \mathcal{N}^{e}(j)}\widehat{y}_{k}^{f}[t]\leq 1,\,\forall (i,j)\in\mathcal{E}, t\in\mathcal{T}, f\in\mathcal{F}  \label{eq:neces-3}
\end{align}

Equation sets (\ref{eq:neces-1-1})-(\ref{eq:neces-3}) constitute the \emph{sufficient} and \emph{necessary} conditions for a feasible channel assignment and scheduling in the UPN during $\mathcal{T}$. However, the above binary variables render any scheduling optimization problem highly complex since they require - as it will be explained below - the solution of challenging discrete non-linear optimization problems. To address this issue, we employ a continuous-time approximation for modeling the operation of the UPN. \rev{This yields the one-off servicing policy the nodes decide at the beginning of each period $\mathcal{T}$. Nevertheless, since this policy is based on the detailed slot-by-slot analysis, it enables the design of near-optimal solutions as it will become clear next.}

\textbf{Routing and Internet Access Decisions}. Specifically, we follow the analysis (among others) of \cite{kodialam}, \cite{XLin}, and \cite{yuan-ups}, and relax the above discrete decision variables by employing the average data rates over each link in every channel. In detail, we use $y_{i}^{c}(n)\geq 0$ and $y_{i}^{f}(n)\geq 0$ to denote user $i$'s average downloading rate (from the Internet) for commodity $n\in\mathcal{I}$, over the cellular connection or the Wi-Fi channel $f\in\mathcal{F}$, respectively. Specifically, we define    :
\begin{align}
&\sum_{n\in\mathcal{I}}y_{i}^{f}(n)=C_{i}^{f}\sum_{t=1}^{T}\widehat{y}_{i}^{f}[t]/T\,,\,\,\forall\,i\in\mathcal{I},\,f\in\mathcal{F}\,, \label{eq:discrete-to-cont-yf} \\
&\sum_{n\in\mathcal{I}}y_{i}^{c}(n)=C_{i}^{c}\sum_{t=1}^{T}\widehat{y}_{i}^{c}[t]/T\,,\,\, \forall\,i\in\mathcal{I}, \label{eq:discrete-to-cont-yc}
\end{align}
where notice that we sum over all commodities.

Also, we denote $x_{ij}^{f}(n)\geq 0$ as user $i$'s average rate of routing data of commodity $n$ to her one-hop downstream neighbor $j$, using channel $f\in\mathcal{F}$. These routing decisions are determined by the scheduling decisions:
\begin{equation}
\sum_{n\in\mathcal{I}}x_{ij}^{f}(n)=C_{ij}^{f}\sum_{t=1}^{T}\widehat{x}_{ij}^{f}[t]/T,, \,\forall (i,j)\in\mathcal{E},\,f\in\mathcal{F} \label{eq:discrete-to-cont-xij}
\end{equation}
Using the above definitions we can describe the operation of the UPN with the \emph{routing} matrix $\bm{x}=\big(x_{ij}^{f}(n)\geq 0:(i,j)\in\mathcal{E},n\in\mathcal{I},n\neq i,f\in\mathcal{F}\big)$, and the \emph{Internet access} matrix $\bm{y}=\big(y_{i}^{f}(n),y_{i}^{c}(n)\geq 0:i\in\mathcal{I},n\in\mathcal{I},f\in\mathcal{F}\big)$. These matrices comprise also the channel and commodity selection decisions.

With the above continuous-time relaxation and substituting (\ref{eq:discrete-to-cont-yf})-(\ref{eq:discrete-to-cont-xij}) to (\ref{eq:neces-1-1})-(\ref{eq:neces-3}) we devise the necessary conditions for a schedule to be feasible. In detail, based on (\ref{eq:neces-2}) we have the following \emph{node-radio} constraint set for the Wi-Fi NICs:
\begin{align}
&\sum_{k\in In(i)}\sum_{f\in\mathcal{F}}\frac{\sum_{n\in\mathcal{I}}x_{ki}^{f}(n)}{C_{ki}^{f}} + \sum_{j\in Out(i)}\sum_{f\in\mathcal{F}}\frac{\sum_{n\in\mathcal{I}}x_{ij}^{f}(n)}{C_{ij}^{f}} \nonumber \\
&+ \sum_{f\in\mathcal{F}}\frac{\sum_{n\in\mathcal{I}}y_{i}^{f}(n)}{C_{i}^{f}}    \leq k_i,\,\forall\,i\in\mathcal{I}. \label{eq:node-radio-constraint}
\end{align}
Similarly, eq. (\ref{eq:neces-3}) lead to the scheduling constraints:
\begin{align}
&\sum_{(k,m)\in I(i,j)}\frac{\sum_{n\in\mathcal{I}}x_{km}^{f}(n)}{C_{km}^{f}}+ \frac{\sum_{n\in\mathcal{I}}x_{ij}^{f}(n)}{C_{ij}^{f}} + \label{eq:interference-constraint} \\
&\sum_{k\in \mathcal{N}^{e}(i)\cup \mathcal{N}^{e}(j)}\frac{\sum_{n\in\mathcal{I}}{y}_{k}^{f}(n)}{C_{k}^{f}} \,\, \leq 1,\,\forall\,(i,j)\in\mathcal{E},\,\forall f\in\mathcal{F}\,. \nonumber
\end{align}
Note that the last term includes the Internet access transmissions of both nodes $i$ (as a neighbor of $j$) and $j$ (as a neighbor of $i$). Moreover, according to (\ref{eq:neces-1-1}) and (\ref{eq:neces-1-2}), the transmissions in different channels over each link are coupled:
\begin{equation}
\sum_{f\in\mathcal{F}}\frac{\sum_{n\in\mathcal{N}}x_{ij}^{f}(n)}{C_{ij}^{f}}\leq 1,\,\, \forall (i,j)\in\mathcal{E}, \label{eq:wifi-congestion}
\end{equation}
\begin{equation}
\sum_{f\in\mathcal{F}}\frac{\sum_{n\in\mathcal{N}}y_{i}^{f}(n) }{C_{i}^{f}}\leq 1,\,\,\,\,\,\,\sum_{n\in\mathcal{I}}y_{i}^{c}(n)\leq C_{i}^{c},\,\,\forall\,i\in\mathcal{I}. \label{eq:cellular-congestion}
\end{equation}

Finally, the routing and Internet access variables should satisfy the flow conservation constraints:
\begin{align}
&\sum_{j\in In(i)}\sum_{f\in\mathcal{F}}x_{ji}^{f}(n)+ y_{i}^{c}(n)+\sum_{f\in\mathcal{F}} y_{i}^{f}(n)= \nonumber \\
&\sum_{j\in Out(i)}\sum_{f\in\mathcal{F}}x_{ij}^{f}(n),\,\,\forall\,i,\,n\in\mathcal{I},\,n\neq i. \label{eq:flow-conserve101} %\label{eq:flowconserve-1}
\end{align}
In other words, the data that node $i\in\mathcal{I}$ receives from her upstream neighbors plus the data she downloads (over all channels) should be equal to the amount of data she routes to her downstream neighbors. This should hold for all commodities except $n=i$ that is consumed locally by $i$.

Using matrices $\bm{x}$ and $\bm{y}$, we can determine the UPN operation (based on the objective that we will define in the sequel), which is implementable (feasible) but not necessarily optimal. In other words, due to the continuous-time relaxation the above constraints are necessary but not sufficient for optimality \cite{kodialam}. On the other hand, we note that the proposed scheme is expected to operate based on the 802.11 MAC protocol, which does not support synchronous operation and hence cannot apply the discrete-time solution (even if it was known). A detailed gap characterization between the theoretical performance of the proposed scheme and its implementation (e.g., by a practical protocol) will be part of our future work.

%\footnotesc{More sophisticated time-slotted systems that have been recently proposed, e.g., such as the TDM MAC \cite{tdm-mac}, can implement synchronous decentralized scheduling and routing algorithms. Such systems can also apply the derived continuous-time solution yielding a suboptimal solution.}

%\input{OG_Section_Journal_Problem11}
%\vspace{-1mm}
\section{Users Decisions and Problem Statement}\label{sec:problem}

\textbf{Data Consumption Utility}. Each user $i\in\mathcal{I}$ has elastic needs and perceives certain satisfaction for consuming data (not including relaying for other users). This is modeled by a utility function $U_{i}(\cdot)$ that depends on the aggregate average rate $r_i$, with which user $i$ directly downloads and receives data of commodity $i$ from her neighbors, i.e.,
\begin{equation}
r_i=y_{i}^{c}(i) + \sum_{f\in\mathcal{F}}y_{i}^{f}(i) + \sum_{j\in In(i)}\sum_{f\in\mathcal{F}}x_{ji}^{f}(i)\,. \label{eq:r-i-function}
\end{equation}
Function $U_i(\cdot)$ is assumed to be positive, increasing, and strictly concave. The concavity captures the user's diminishing marginal satisfaction of additional data consumption. Different users may have different utility functions \cite{lui-mesh}, \cite{walrand-wifi}. For example, the utility of a user who is streaming a video file is initially proportional to the downloading rate, and saturates after the maximum available (encoding) rate has been reached. On the other hand, the utility of a user who is downloading a file increases strictly with the downloading rate.

\textbf{Energy Expenditure Cost}. Energy consumption is a major consideration for mobile users since their devices have often limited energy resources \cite{iphone}. Let $e_{ij}^{f,s}$ (joules/bytes) be the energy that node $i$ consumes when she sends one byte to node $j$ over link $(i,j)\in\mathcal{E}$ in channel $f$. Also, $e_{ij}^{f,r}$ denotes the energy that node $j$ consumes for receiving one byte from node $i$ in $f$. Finally, $e_{i}^{c}$ and $e_{i}^{f}$ are the energy consumptions when node $i$ downloads one byte from the Internet, through her Wi-Fi or cellular connection, respectively.

\rev{
Providing analytical expressions for wireless transmission energy costs is particularly challenging (e.g., see seminal work \cite{nilsson}), and is further perplexed in the case of handheld devices. For example, the relation of power consumption with throughput is affected by the method that is employed to increase the latter \cite{koutsonikolas-infocom15}. Nevertheless, it is commonly agreed that the energy consumption depends on the transmission time and the volume of transmitted data, and therefore on the link capacity which in turn is shaped by the channel conditions. To capture qualitatively the above aspects and avoid delving into the physical layer details, we follow here the measurement studies \cite{balasubra-imc}, \cite{huang-mobisys}, which indicate that the power consumption (in mWatts) for transmitting ($P_{ij}^{f,s}$) and receiving ($P_{ij}^{f,r}$) over link $(i,j)\in\mathcal{E}$ has a constant offset plus a term linearly dependent on the rate (hence depends also on channel $f$)}:
\begin{equation}
P_{ij}^{f,s}=\delta^{f,s}C_{ij}^{f}+\theta^{f,s},\,\,\,P_{ij}^{f,r}=\delta^{f,r}C_{ij}^{f}+\theta^{f,r}\nonumber
\end{equation}
Parameters $\delta^{f,s}$, $\delta^{f,r}$, $\theta^{f,s}$, and $\theta^{f,r}$ are constants and depend on the technology choices (Wi-Fi, 3G, or LTE-A) \cite{balasubra-imc}, \cite{huang-mobisys}. \rev{Hence, the energy consumption of node $i$ in link $(i,j)$ is\footnotesc{\rev{This formulation assumes that each time a link is employed, it can achieve its maximum capacity since the underlying scheduling scheme ensures a proper coordination among the interfering transmissions \cite{tdm-mac}, \cite{XLin}. If the scheduling is achieved with the traditional CSMA/CA protocol, i.e., not a time-slotted system, then the energy consumption will be higher due to collisions.}}:
\begin{equation}
\sum_{n\in\mathcal{I}}\sum_{f\in\mathcal{F}}e_{ij}^{f,s}x_{ij}^{f}(n)T=\sum_{n\in\mathcal{I}}\sum_{f\in\mathcal{F}}\frac{x_{ij}^{f}(n)}{C_{ij}^{f}}T\big(\delta^{f,s}C_{ij}^{f}+\theta^{f,s}\big).
\label{eq:energy-equation001} \nonumber
\end{equation}
Note that the first term on the RHS captures the total time (within $\mathcal{T}$) used for transmitting data in $f$, and the second term of this product captures the power consumption on a particular data rate.} The energy consumption for a receiver can be defined similarly. Finally, the total energy consumption by node $i$ is:
\begin{align}
&e_{i}= \sum_{n\in\mathcal{I}}\sum_{f\in\mathcal{F}}e_{i}^{f}y_{i}^{f}(n)T + \sum_{j\in Out(i)}\sum_{n\in\mathcal{I}}\sum_{f\in\mathcal{F}}e_{ij}^{f,s}x_{ij}^{f}(n)T \nonumber \\
&+ e_{i}^{c}\sum_{n\in\mathcal{I}}y_{i}^{c}(n)T + \sum_{j\in In(i)}\sum_{n\in\mathcal{I}}\sum_{f\in\mathcal{F}}e_{ji}^{f,r}x_{ji}^{f}(n)T\,.
\label{eq:energy-equation002}
\end{align}

We assume that each node $i\in\mathcal{I}$ has a maximum energy budget of $E_i\geq 0$ units (joules). This parameter can be explicitly set by the user (e.g., through a proper user interface on the mobile app), or it may represent her actual residual battery. Clearly, we need to have $e_{i}\leq E_i$ for each node. Moreover, different users may have different energy consumption preferences. For example, some users may be willing to consume almost their entire energy budgets, while others may prefer a much more conservative energy consumption.

To capture the above user difference, we introduce for each user $i$ an energy cost (or, \emph{dissatisfaction}) function $V_i(\cdot)$, which is positive, increasing, and strictly convex in $e_{i}$. Its value goes to infinity when the energy budget of the user is depleted. A function that satisfies these requirements is, for example, $V_i(e_{i})=\phi_i/(E_i-e_{i})$, where $\phi_i\in [0,1]$ is a normalization parameter indicating user $i$'s sensitivity in energy consumption. \rev{Finally, we wish to stress that there is an additional energy cost for the exchange of coordination messages (as it will be explained in Section \ref{sec:distributed}) and an energy consumption due to ACK messages. The latter is a relatively small portion of the relayed traffic volume, and does not have a significant impact on the UPN operation \cite{syrivelis}.} 

\rev{
\textbf{Data Plan Cost}. The impact of the remaining data plan quota on users' collaboration decisions are very crucial. Yet, the analysis of quota dynamics is very challenging \cite{andrews-journal}, especially if the demand is elastic and dynamically decided (as here). In order to study this aspect in a tractable fashion, we characterize each user by a \emph{psychological price} per unit of cellular data, as a function of her remaining data quota. 

We assume that these psychological parameters remain fixed during each time period, but they can change across different time periods so as to reflect the quota's impact. Therefore, the dissatisfaction of each user $i$ due to the consumption of her data plan can be described by a convex function as follows:
\begin{equation}
Q_i(y_{i}^{ag,c})=\frac{o_i}{A_i-y_{i}^{ag,c}} \label{eq:quota}
\end{equation}
where $y_{i}^{ag,c}=\sum_{n\in\mathcal{I}}y_{i}^{c}(n)T$ is the aggregate amount of data that will be downloaded through her cellular link during $\mathcal{T}$, and $A_i$ the available quota at the beginning of the period. Parameter $o_i>0$ is the psychological price, which captures the user aversion on the consumption of her data plan (including the impact of the actual cost per byte, the anticipation of future needs, and other such latent factors) and can vary both across users and time periods. This model has the following desirable properties. First, as the amount of consumed data approaches the currently available quota, the psychological monetary cost increases very fast. Second, even if two users $i$ and $j$ have the same quotas ($A_i=A_j$) and download equal amounts of data ($y_{i}^{ag,c}=y_{j}^{ag,c}$), they still might incur different costs due to their preferences ($o_i\neq o_j$).}

\textbf{Payoff Function}. In order to capture all the above aspects that affect the users' decisions, we introduce the payoff function $J_{i}(\cdot)$ that user $i$ receives when she participates in the crowdsourced Internet access service. \rev{More specifically:
\begin{align}
J_{i}(\bm{x}_i,\bm{x}^{in}_{i},\bm{y}_{i})&=U_i(r_i) - Q_i(y_{i}^{ag,c}) \nonumber \\
 &- \sum_{f\in\mathcal{F}}p_{i}^{f}\sum_{n\in\mathcal{I}}y_{i}^{f}(n)T -  V_i(e_{i}), \nonumber
\end{align}
where $p_{i}^{f}\geq 0$ is the price for accessing Internet through Wi-Fi channel $f$.} Also, we have defined the matrices of downloading and routing: $\bm{y}_i=(y_{i}^{f}(n),\,y_{i}^{c}(n):\,n\in\mathcal{I},f\in\mathcal{F})$, $\bm{x}_i=(x_{ij}^{f}(n):\,j\in Out(i),\,n\in\mathcal{I}, f\in\mathcal{F})$, and $\bm{x}^{in}_{i}=(x_{ji}^{f}(n):\,j\in In (i),\,n\in\mathcal{I}, f\in\mathcal{F})$.

%\begin{align}
%&\bm{y}_i=\big(y_{i}^{f}(n),\,y_{i}^{c}(n):\,n\in\mathcal{I},f\in\mathcal{F} \big),\nonumber \\
%&\bm{x}_i=\big(x_{ij}^{f}(n):\,j\in Out(i),\,n\in\mathcal{I}, f\in\mathcal{F} \big), \nonumber \\
%&\bm{x}^{in}_{i}=\big(x_{ji}^{f}(n):\,j\in In (i),\,n\in\mathcal{I}, f\in\mathcal{F} \big)\,.\nonumber
%\end{align}

\begin{table}
	%\caption{Key Notations}
	\centering%
	\begin{tabular}{|c|c|}
		\hline %
		\hline
		%\multicolumn{}{|c||c|}{Variables & Notation}\\
		Symbol & Physical Meaning\\
		\hline %
		$C_{ij}^{f}$ & Capacity of link $(i,j)\in\mathcal{E}$, channel $f$\\
		\hline
		$k_{i}$ & Total number of NICs of each node $i\in\mathcal{I}$\\
		%\hline
		%$p_{i}^{c}$ & Cost per byte over the cellular link of $i\in\mathcal{I}$\\
		\hline
		$p_{i}^{f}$ & Cost per  byte over the Wi-Fi Internet link of $i\in\mathcal{I}$\\
		\hline
		$C_{i}^{f}$ & Capacity of Internet access link of node $i$ in ch. $f$ (Wi-Fi)\\
		\hline %
		$C_{i}^{c}$ & Capacity of Internet access cellular link\\
		\hline %
		$e_{ij}^{f,s}$ & Consumed energy per transm. byte over link $(i,j)$ (ch. $f$)\\
		\hline %
		$e_{ij}^{f,r}$ & Consumed energy per receiv. byte over link $(i,j)$ (ch. $f$)\\
		\hline %
		$e_{i}^{f}$ & Consumed energy per receiv. byte at $i$ for Wi-Fi Internet ($f$)\\
		\hline %
		$e_{i}^{c}$ & Consumed energy per receiv. byte at $i$ for cellular Internet\\
		\hline %
		$x_{ij}^{f}(n)$ & Average transfer rate of commodity $n$ over $(i,j)$ in ch. $f$\\
		\hline
		$y_{i}^{f}(n)$ & Average download rate of commodity $n$ by user $i$ (ch. $f$)\\
		\hline
		$y_{i}^{c}(n)$ & Average download rate of commodity $n$ by user $i$ (cellular)\\
		\hline
		$z_{ij}(n)$ & Currency that user $j$ pays to $i$ for receiving data $n$\\
		\hline
		$\bm{x}$ & Routing matrix\\
		\hline
		$\bm{y}$ & Internet access matrix\\
		\hline
		$U_{i}(\cdot)$ & Utility function for user $i$\\
		\hline
		$V_{i}(\cdot)$ & Energy consumption cost function for user $i$\\
		\hline
		$Q_{i}(\cdot)$ & \rev{Data plan cost function for user} $i$\\
		\hline
		$J_{i}(\cdot)$ & Payoff function for user $i$ when joining the service\\
		\hline
		$J_{i}^{s}$ & Standalone performance for user $i$\\
		\hline
		$\mathcal{T},\, T$ & Time period $\mathcal{T}$ comprising $T$ slots of unit length\\
		\hline
	\end{tabular}
	\label{table:notation} % is used to refer this table in the text
\end{table}

Note that the payoff $J_{i}$ monotonically decreases with the amount of data that user $i$ downloads for other users $n\in\mathcal{I},\,n\neq i$, and routes to her downstream neighbors. \rev{Therefore, users are reluctant to execute these tasks without proper compensations. Furthermore, some users may not have communication needs in a certain time period, and therefore may not be willing to participate in the UPN service. To address these issues, it is necessary to use a payment mechanism that will alter the payoff function of the users and promote collaboration. We provide the details in Section \ref{sec:mechanism}.}

\textbf{Standalone Operation}. Moreover, a rational user will join the service only if this will improve her payoff in comparison to her standalone performance. In the standalone operation, each user $i$ does not receive or deliver data to her neighbors, nor she downloads data for any other user. Therefore, the optimal Internet access strategy can be obtained by solving the following \emph{Standalone Operation Problem} (SOP):
%\begin{align}
%\max_{y_{i}^{c}(i),\{y_{i}^{f}(i)\}_{f\in\mathcal{F}}}\,\, &U_i\big( y_{i}^{c}(i) + \sum_{f\in\mathcal{F}}y_{i}^{f}(i)\big)-p_{i}^{f}\sum_{f\in\mathcal{F}}y_{i}^{f}(i)T\nonumber \\
% &-p_{i}^{c}y_{i}^{c}(i)T- V_i\big( y_{i}^{c}(i) + \sum_{f\in\mathcal{F}}y_{i}^{f}(i) \big) \label{eq:sop}
%\end{align}
\begin{align}
\max_{y_{i}^{c}(i),\{y_{i}^{f}(i)\}_{f\in\mathcal{F}}}\,\, &U_i\big( y_{i}^{c}(i) + \sum_{f\in\mathcal{F}}y_{i}^{f}(i)\big)-p_{i}^{f}\sum_{f\in\mathcal{F}}y_{i}^{f}(i)T\nonumber \\
& - Q_i(y_{i}^{c}) - V_i\big( y_{i}^{c}(i) + \sum_{f\in\mathcal{F}}y_{i}^{f}(i) \big) \label{eq:sop}
\end{align}
\begin{equation}
\texttt{s.t.}\,\,\, \sum_{f\in\mathcal{F}}\frac{y_{i}^{f}(i)}{C_{i}^{f}}\leq 1,\,\,\,\,0\leq y_{i}^{c}(i)\leq C_{i}^{c}\,,
\end{equation}
where we have written both $U_i(\cdot)$ and $V_i(\cdot)$ as functions of the downloading decisions of user $i$. This problem has a strictly concave objective, and a compact and convex non-empty constraint set (under the assumption of at least one non-zero Internet access capacity). Hence, it has a unique solution denoted as $J_{i}^{s}$, where $s$ stands for ``standalone''. This will serve in the sequel as the performance benchmark for the comparison purpose.

\textbf{Problem Statement}. We are interested in designing a resource sharing mechanism that determines how much resources each user should contribute, in terms of Internet access, relaying bandwidth, and battery energy, so as to maximize the service capacity (amount of data delivered within period $\mathcal{T}$). Accordingly, the mechanism should decide how this capacity will be shared by the users, and how they should be compensated for their contribution in terms of money transfers. These tasks should be jointly designed to satisfy the agreed fairness criterion. Formally, the problem is defined as follows:

\vspace{1mm}
%\fbox{
\text{}\textbf{UPN Collaboration and Servicing Problem:} \emph{Given the graph $G=(\mathcal{I},\mathcal{E}, \mathcal{B})$, the capacity constraints, energy consumption parameters, pricing parameters, and the users' utility functions, find the Internet access, routing and payment decisions of the users, which ensure the fair and efficient performance of the crowdsourced Internet access service}.
%}
%\vspace{1mm}

%\input{OG_Section_Journal_Mechanism11}
\section{The Cooperative Servicing Game}\label{sec:mechanism}

\subsection{Virtual Currency System}

In these cooperative systems, an important issue that may deteriorate their performance is the problem of \emph{double coincidence of needs} \cite{jevons}. \rev{In the context of UPNs, this problem appears as follows. A user served by another user may not be able to directly return the favor in the current period by offering similar services. Therefore, users may not want to help those that cannot reciprocate. Clearly, this problem reduces the number of users who can potentially collaborate with each other and impacts the UPN performance.  

To address the above issue, we introduce a virtual currency system where users pay for the services they receive and are paid when offering such services. This solution makes offering service more attractive even for users who do not currently have communication needs.} Since we aim at a decentralized service design, we assume that transactions are only possible among adjacent nodes (i.e., two nodes of the same link) instead of between the final destination node and the gateway or the intermediate relays. Specifically, let $z_{ji}(n)\geq 0$ denote the currency paid by user $i$ to $j$, for the data of commodity $n$ that is delivered over link $(j,i)\in\mathcal{E}$.

At the beginning of the current period, each user $i$ has a budget $D_i\geq 0$, and is rewarded with an additional amount $\gamma >0$ of virtual currency by the system for her participation in the current period. This value of $\gamma$ is small, compared to the payments exchanged among the nodes, and it is identical for each user. However, $\gamma$ is very important, as we explain in detail below, since it ensures that all users are willing to participate in the service even if they don't eventually exchange Internet access and relaying services with the other users.

At the end of the period, user $i$'s virtual currency budget is:
\begin{eqnarray}
H_i(\bm{z}_i,\bm{z}^{out}_{i})&=& \beta_i\biggl(D_i+ \gamma+ \sum_{n\in\mathcal{I}}\sum_{j\in Out(i)}z_{ij}(n) \nonumber \\
&-&  \sum_{n\in\mathcal{I}}\sum_{j\in In(i)}z_{ji}(n)\biggr), \nonumber
\end{eqnarray}
where we have defined $\bm{z}_i=\big(z_{ji}(n):j\in In(i),n\in\mathcal{I}\big)$ and $\bm{z}^{out}_{i}=\big(z_{ij}(n):j\in Out(i),n\in\mathcal{I}\big)$. Parameter $\beta_i>0$ captures how important the virtual currency is for user $i$, i.e., reflects her expectation for exploiting the virtual currency in the future. Parameter $\beta_i$ can be considered as the \emph{discount rate} for each user. For example, a user that does not intend to participate in the service later does not value the virtual currency much, and her corresponding $\beta_i$ will be close to $0$. Alternatively, these parameters can be set by the service so as to normalize the virtual currency benefit with the benefits of the served data. The linear form of $H_{i}(\bm{z}_i,\bm{z}^{out}_{i})$ implies that users are risk neutral \cite{myerson-gametheory}. After introducing the virtual currency system, the payoff of each user becomes the sum of $J_{i}(\cdot)$ and $H_{i}(\cdot)$.

\vspace{-3mm}
\subsection{Bargaining Problem}

The users are self-interested, and only participate in the crowdsourced connectivity service if this ensures higher payoffs for them. In this work, we propose a resource sharing scheme based on the Nash bargaining solution (NBS), which has the following desirable properties \cite{myerson-gametheory}, \cite{mazumdar}:

\emph{Strong Efficiency}: The solution is Pareto optimal. Hence, there is no other feasible solution which yields a better payoff than the NBS for one user, and no worse payoff for all the other users.

\emph{Individual Rationality}: The solution offers to each user a payoff that is no worse than the payoff she has when she does not participate in the bargaining game (\emph{disagreement point}). This property is especially important for our problem, as a fairness rule based on direct resource allocation only (e.g., an equal energy or bandwidth sharing scheme) may fail to incentivize all users to join the service.

\emph{Scale Covariance}: If we apply an affine transformation to the players' utility functions, then the initial solution can yield the corresponding bargaining solution for the new problem under the same affine transformation. This property ensures that the resource allocation is fair and optimal, independently of the way that we measure the players' utilities.

\emph{Symmetry and Independence of Irrelevant Alternatives}: These two properties ensure that outcomes which would not have been selected (i.e., they are not favorable to users), do not affect the bargaining solution. Moreover, the payoff that each user receives does not depend on her identity/label.

We define an $|\mathcal{I}|$-person bargaining game \cite{nash-bargaining}, where users can exchange services and pay each other with virtual currency. Hence, the produced welfare (service capacity and currency) can be divided in an arbitrary fashion among the users. In line with similar commercial services, e.g., \cite{opengarden}, we assume that when a user joins the UPN she may cooperate with any of the other nearby users, i.e., there is no option for selecting with whom to cooperate. Hence, this is a pure bargaining problem.

However, the formulation and distributed algorithm design for solving this NBS problem are highly non-trivial and depart significantly from previous related approaches, e.g., \cite{mazumdar}. Namely, the coupling of the decisions of different users in their payoff functions (i.e., they need to agree on channels and allocated rates per commodity) and in the constraints set (e.g., when computing the link capacities), as well as the existence of the virtual currency payments call for a new approach.

\begin{figure*}[!t]
	% ensure that we have normalsize text
	\normalsize
	% Store the current equation number.
	\setcounter{MYtempeqncnt}{\value{equation}}
	% Set the equation number to one less than the one
	% desired for the first equation here.
	% The value here will have to changed if equations
	% are added or removed prior to the place these
	% equations are referenced in the main text.
	\setcounter{equation}{33}
	\small
	\begin{align}
	&L= \sum_{i\in\mathcal{I}}\big[\log \big(J_{i}^{c}(\bm{x}_i, \bm{\xi}_{i},\bm{y}_{i})+H_i(\bm{z}_i,\bm{\sigma}_{i})-J_{i}^{s}-\beta_iD_i\big)+  \sum_{n\in\mathcal{I}}\lambda_{i}(n)\big( \sum_{j\in In(i)}\sum_{f\in\mathcal{F}}x_{ji}^{f}(n)+y_{i}^{c}(n)+\sum_{f\in\mathcal{F}}y_{i}^{f}(n)-\sum_{j\in Out(i)}\sum_{f\in\mathcal{F}}x_{ij}^{f}(n)\big) \nonumber \\
	&+  \sum_{n\in\mathcal{I}}\sum_{j\in In(i)}\sum_{f\in\mathcal{F}}\tau_{ji}^{f}(n)(\xi_{ji}^{f}(n)-x_{ji}^{f}(n)) +  \sum_{n\in\mathcal{I}}\sum_{j\in Out(i)}\pi_{ij}(n)(\sigma_{ij}(n)-z_{ij}(n)) -  \rho_i\big(\sum_{n\in\mathcal{I}}\sum_{j\in In(i)}z_{ji}(n) - D_i-\gamma \nonumber \\
	&- \sum_{n\in\mathcal{I}}\sum_{j\in Out(i)}z_{ij}(n) \big) -  \mu_i\big( \sum_{k\in In(i)}\sum_{f\in\mathcal{F}}\frac{\sum_{n\in\mathcal{I}}x_{ki}^{f}(n)}{C_{ki}^{f}} + \sum_{j\in Out(i)}\sum_{f\in\mathcal{F}}\frac{\sum_{n\in\mathcal{I}}x_{ij}^{f}(n)}{C_{ij}^{f}} +\sum_{f\in\mathcal{F}}\frac{\sum_{n\in\mathcal{I}}y_{i}^{f}(n)}{C_{i}^{f}} -k_i \big)- \\
	&\sum_{j\in Out(i)}\sum_{f\in\mathcal{F}}\psi_{ij}^{f}\big(\sum_{(k,m)\in I(i,j)}\frac{ \sum_{n}x_{km}^{f}(n)}{C_{km}^{f}}+\frac{\sum_{n}x_{ij}^{f}(n)}{C_{ij}^{f}} +  \sum_{k\in \mathcal{N}^{e}(i)\cup \mathcal{N}^{e}(j)}\frac{\sum_{n}y_{k}^{f}(n)}{C_{k}^{f}} - 1 \big)\big].\nonumber
	\end{align}
	\setcounter{equation}{34}
	\hrulefill
	%\vspace*{2pt}
	\vspace{-4mm}
\end{figure*}
\normalsize

Let us first give the formal NBS definition. Consider the game $\mathcal{G}=\left<\mathcal{I},\mathcal{A},\{w_i\}\right>$, where $\mathcal{I}\eq\{1,2,\dii ,I\}$ is the player set, and $\mathcal{A} \eq \mathcal{A}_1\times \mathcal{A}_2 \times\dii \times \mathcal{A}_I$ is the strategy space where $\mathcal{A}_i$ is the set of strategies (actions) available to player $i$. The payoff of each player $i$, $w_i(\cdot)$, depends on the strategy profile of all players, $\bm{a} = (a_1,a_2, \dii ,a_I)$, with $a_i \in \mathcal{A}_i$. The NBS for this game is \cite{myerson-gametheory}:

\begin{definition}[Nash Bargaining Solution--NBS]\label{def:NBS}
The strategy profile $ \bm{a}^{\ast}=(a_1^*, a_2^*,\ldots,a_I^*)$ is an NBS, if it solves the following problem:
\begin{equation}\label{eq:NBS-definition}
%\begin{aligned}
%\max_{\bm{a} \in \A} & \ \Pi_{i\in\mathcal{I}} \big(u_i(\bm{a})-u_i^d\big) \\
%\mbox{s.t. }
%& \ u_i(\bm{a}) \geq u_i^d,\ \forall\,i\in\mathcal{I}
%\end{aligned}
\begin{aligned}
\max_{\bm{a} \in \mathcal{A}} & \ \Pi_{i\in\mathcal{I}} \big(w_i(\bm{a})-w_i^d\big)
\,\,\,\mbox{s.t.}\,\, \ w_i(\bm{a}) \geq w_i^d,\ \forall\,i\in\mathcal{I}
\end{aligned}
\end{equation}
where $w_{i}^{d}$ is the disagreement point of player $i$, i.e., her payoff when an agreement is not reached.
\end{definition}

Next we consider an equivalent formulation, where the product of terms in (\ref{eq:NBS-definition}) is substituted by the sum of the corresponding logarithms \cite{mazumdar}. Hence, we define the Bargaining Optimization Problem (BOP):
\begin{align}
&\max_{\bm{x},\bm{y},\bm{z}}\sum_{i\in\mathcal{I}}\log \big(J_{i}(\bm{x}_i, \bm{x}^{in}_{i},\bm{y}_{i})+H_i(\bm{z}_i,\bm{z}^{out}_{i})-J_{i}^{s}-\beta_iD_i \big)\nonumber\\
&s.t.\,\,\,(\ref{eq:node-radio-constraint}), (\ref{eq:interference-constraint}), (\ref{eq:wifi-congestion}), (\ref{eq:cellular-congestion}), (\ref{eq:flow-conserve101})  \nonumber\\
&\sum_{n\in\mathcal{I}}\big(\sum_{j\in In(i)}z_{ji}(n)-\sum_{j\in Out(i)}z_{ij}(n)\big)\leq D_i+\gamma,\forall i\in\mathcal{I}\label{eq:vc-budget}\\
&J_{i}(\bm{x}_i, \bm{x}^{in}_{i},\bm{y}_{i})+H_i(\bm{z}_i,\bm{z}^{out}_{i})\geq J_{i}^{s}+\beta_iD_i, \forall\,i\in\mathcal{I} \label{eq:IR-constraint}\\
&0\leq x_{ij}^{f}(n)\leq C_{ij}^{f},\,\forall\,i,j,n\,\in\mathcal{I},f\in\mathcal{F}\,, \label{eq:link-capacity-add1} \\
&0\leq y_{i}^{c}(n)\leq C_{i}^{c},\,\forall\,i\in\mathcal{I},\, n\in\mathcal{I}\,, \label{eq:link-capacity-add2}\\
&0\leq y_{i}^{f}(n)\leq C_{i}^{f},\,\forall\,i,n\,\in\mathcal{I},f\in\mathcal{F}\,, \label{eq:link-capacity-add3}\\
&0\leq z_{ij}(n)\leq \sum_{i\in\mathcal{I}}(D_i+\gamma),\,\forall\,i,j,n\,\in\mathcal{I}\,, \label{eq:budget-constraint-new}
\end{align}
\noindent where the \emph{disagreement point} for each user is the sum of the standalone performance $J_{i}^{s}$ she can achieve, and the normalized virtual currency $\beta_iD_i$ she has at the beginning of the period. Eq. (\ref{eq:vc-budget}) states that users cannot have a virtual currency deficit. Eq. (\ref{eq:IR-constraint}) is the individual-rationality constraint indicating that each user will agree to cooperate only if this does not make her payoff worse. Eq. (\ref{eq:link-capacity-add1})-(\ref{eq:link-capacity-add3}) ensure that all flows will satisfy the capacity constraints of the respective links. Finally, notice that in (\ref{eq:budget-constraint-new}), each payment decision $z_{ij}(n)$ is bounded by the total virtual currency at the system. This restrains users from asking or promising payments that exceed the volume of this virtual economy (hence being infeasible) and as we will explain in the sequel, it will also facilitate the convergence of the distributed algorithm.

The BOP problem always has a non-empty feasible region. Therefore, due to (\ref{eq:IR-constraint}) there is no user of whom the payoff will decrease by participating in the service. Hence, all users are incentivized to join the service in each time period. Technically, this is ensured due to the virtual currency system and specifically the rewarding parameter $\gamma$. We should emphasize, however, that it is not necessary that all participating users will serve other users at the NBS. The optimal solution depends on the properties of the UPN graph $G$ and the users' needs and resources. The solution of the BOP problem is unique. In particular, the following lemma holds:
\begin{lemma}
The BOP problem has a unique optimal solution.
\end{lemma}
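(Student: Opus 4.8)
The plan is to read the BOP as a convex program and apply the principle that a strictly concave objective over a nonempty compact convex set attains a unique maximizer; the substance lies in checking each hypothesis, and the genuine difficulty is strict concavity. Throughout I write the argument of each logarithm as $g_i \triangleq J_{i}(\bm{x}_i,\bm{x}^{in}_{i},\bm{y}_{i})+H_i(\bm{z}_i,\bm{z}^{out}_{i})-J_{i}^{s}-\beta_iD_i$, so the objective is $\sum_{i\in\mathcal{I}}\log g_i$. First I would establish that the feasible region is nonempty, convex, and compact. Nonemptiness follows from the \emph{standalone} profile $\bm x=\bm 0,\ \bm z=\bm 0$ in which every user downloads only her SOP-optimal rate: there $J_i=J_i^s$ and $H_i=\beta_i(D_i+\gamma)$, hence $g_i=\beta_i\gamma>0$, so this point lies in the domain of the logarithm and satisfies (\ref{eq:IR-constraint}) with strict slack. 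This is exactly the role of the reward $\gamma$. Convexity holds because the scheduling, node-radio, congestion, flow-conservation, budget, and box constraints (\ref{eq:node-radio-constraint})--(\ref{eq:budget-constraint-new}) are all affine, while the individual-rationality constraint (\ref{eq:IR-constraint}) is the superlevel set $\{g_i\ge 0\}$ of the concave function $g_i$; and compactness follows since the capacity and currency bounds make the region closed and bounded.

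Second I would verify concavity of the objective and existence of a maximizer. Each $g_i$ is concave: $U_i(r_i)$ is concave because $U_i$ is strictly concave and $r_i$ is affine (\ref{eq:r-i-function}); $-Q_i$ and $-V_i$ are concave because $Q_i,V_i$ are convex with affine arguments; and the price term together with $H_i$ is affine. Since $\log$ is increasing and concave, each $\log g_i$ is concave, and so is the sum. Because $\log g_i\to-\infty$ as $g_i\to 0^{+}$ while the standalone point gives a finite value, the supremum is approached only where all $g_i>0$; the objective is upper semicontinuous on the compact feasible set, so a maximizer exists by Weierstrass.

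Third---the crux---I would prove uniqueness. Since the set of maximizers of a concave function over a convex set is convex, it suffices to exclude a nondegenerate segment of optima. Along any segment between two feasible points, strict concavity of $\log$ together with strict concavity of $U_i$ and strict convexity of $V_i,Q_i$ forces the aggregates $r_i$, $e_i$, and $y_{i}^{ag,c}$ to coincide at both endpoints; otherwise the midpoint would strictly increase the objective, contradicting optimality. This already yields uniqueness of the bargaining payoff vector $(g_1,\dots,g_I)$, i.e.\ of the NBS allocation itself. The obstacle I expect to be hardest is upgrading this to uniqueness of the full profile $(\bm x,\bm y,\bm z)$: the transit routing variables $x_{ij}^{f}(n)$ and the currency variables $z_{ij}(n)$ enter the objective only affinely---through the now-fixed aggregates and through the affine $H_i$---so the objective has no curvature along directions that merely reroute flow among equal-cost, equal-capacity patterns or circulate currency. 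To close this I would show that, once every $r_i$ and $e_i$ is fixed, the per-commodity flow-conservation equations (\ref{eq:flow-conserve101}) together with the tightness of the coupling constraints (\ref{eq:node-radio-constraint})--(\ref{eq:cellular-congestion}) leave a single feasible flow, after which the payments are pinned down by (\ref{eq:vc-budget}) and (\ref{eq:budget-constraint-new}); where routing redundancy in $G$ would otherwise permit distinct flows, the statement is to be read as uniqueness of the optimal payoff vector, which is precisely the property the mechanism requires.
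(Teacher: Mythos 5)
Your argument follows the same skeleton as the paper's proof---concave objective, compact convex nonempty feasible set, with the participation reward $\gamma$ guaranteeing that the all-zero routing/payment profile satisfies the individual-rationality constraint (\ref{eq:IR-constraint}) strictly and keeps every logarithmic argument positive---but you are considerably more careful at the one point where the paper is cavalier. The paper simply asserts that the objective ``is strictly concave, since it is a composition of (strictly) concave functions'' and concludes uniqueness of the full profile $(\bm{x}^{*},\bm{y}^{*},\bm{z}^{*})$. You correctly observe that this is not so: each $g_i$ depends on the decision variables only through the aggregates $r_i$, $e_i$, $y_i^{ag,c}$ and the net currency flow, so the objective is flat along any feasible direction that preserves these (for instance, a currency circulation around a directed cycle, or a rerouting between two channels of equal capacity and energy cost on the same link). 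Your midpoint argument therefore establishes exactly what the convexity structure delivers---existence of a maximizer and uniqueness of the payoff vector $(g_1,\dots,g_I)$, i.e.\ of the NBS allocation---which is the economically meaningful conclusion and the one the bargaining framework actually needs.

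The one step of your proposal that does not go through is the attempted upgrade at the end: it is not true in general that fixing every $r_i$ and $e_i$ together with flow conservation (\ref{eq:flow-conserve101}) and tightness of the coupling constraints pins down a single flow and a single payment matrix; the currency-circulation example already defeats this whenever the optimum lies in the interior of the box (\ref{eq:budget-constraint-new}) and the communication graph contains a directed cycle. Your fallback---reading the lemma as uniqueness of the optimal payoff vector---is the defensible interpretation; full uniqueness of $(\bm{x}^{*},\bm{y}^{*},\bm{z}^{*})$ would require an additional genericity or degeneracy-breaking assumption (e.g.\ a tie-breaking regularization) that neither your proof nor the paper's supplies.
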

\begin{proof}
The objective function is strictly concave, since it is a composition of (strictly) concave functions. Additionally, the constraint set is compact, convex and non-empty. Notice that constraint (\ref{eq:IR-constraint}) can always be satisfied by some solution point. For example, each user $i$ can choose not to route any traffic, i.e., $x_{ij}^{f}(n)=x_{ji}^{f}(n)=0,\,\forall\,i,j,n\in\mathcal{I},\,f\in\mathcal{F}$, and only download data for herself. This way, she achieves her standalone performance, but still improves her payoff due to the participation reward $\gamma$. This also ensures that the logarithmic arguments are non-zero. Therefore, the problem has always a unique solution $(\bm{x}^{*}, \bm{y}^{*}, \bm{z}^{*})$ \cite{bertsekas-nedic}.
\end{proof}

We can derive the solution of the BOP problem by solving the KKT conditions \cite{bertsekas-nedic}. This will yield the efficient and fair Internet access and routing decisions as well as the preferable channel for each communication link. Additionally, it will determine the currency transfers among the users. Based on the system parameters, i.e., connection capacities, battery energy, and data plans, the service can offload data to Wi-Fi networks or onload data to cellular networks.

However, in all cases, the critical question is whether we can find this solution in a distributed fashion, so as to enable the distributed execution of the resource sharing mechanism. This is an important feature for such crowdsourced mobile Internet access services without central controllers.

\section{Distributed Algorithm Design for BOP}\label{sec:distributed}

The difficulties to solve the BOP problem in a distributed fashion are twofold. First, the decision variables of different users are coupled in the constraints. That is, the routing decisions of each user should take into account the capacity constraints of her neighboring nodes. Second, there is coupling in the objective functions. Namely, the logarithmic component of the BOP objective that corresponds to each user $i$ is dependent on the decision variables of her neighbors. We address these issues by introducing new auxiliary local variables for each user and \emph{consistency} constraints for each pair of neighboring users (for the coupled objectives) \cite{chiang-tutorial}. The transformed problem then has coupling only in the constraints, and can be solved using a primal-dual Lagrange iterative decomposition method \cite{chiang-tutorial}.

Let us focus on user $i\in\mathcal{I}$. Her payoff depends on her own decisions ($\bm{x}_i, \bm{y}_i, \bm{z}_i$) and the decisions $\bm{x}^{in}_{i}$, and $\bm{z}^{out}_{i}$ of her one-hop neighbors. Clearly, node $i$ can route data to her upstream neighbors $In(i)$ only if they agree on the servicing rate and the respective payments. To deal with these decision couplings, we introduce the auxiliary variables and component-wise equality constraints as follows:
\begin{align}
&\xi_{ji}^{f}(n)=x_{ji}^{f}(n),\,\forall\,i\in\mathcal{I},j\in In(i),n\in\mathcal{I},f\in\mathcal{F} \label{eq:ksi-xi}, \\
&\sigma_{ij}(n)={z}_{ij}(n),\,\forall\,i\in\mathcal{I},\,j\in Out(i),\,n\in\mathcal{I}\,. \label{eq:sigma-z}
\end{align}
We also define the matrices $\bm{\xi}_{i}=( \xi_{ji}^{f}(n)\geq 0: j\in In(i),n\in\mathcal{I},\,f\in\mathcal{F})$ and $\bm{\sigma}_{i}=( \sigma_{ij}(n):j\in Out(i),n\in\mathcal{I})$ that are maintained locally by every node $i\in\mathcal{I}$. Technically, these local variables substitute the decisions $\bm{x}^{in}_{i}$, and $\bm{z}^{out}_{i}$ of $i$'s neighbors, and hence enable her to optimize independently her payoff function. This way, each user can independently determine her downloading, routing, and payment variables, subject to coordination with her one-hop neighbors that is achieved when (\ref{eq:ksi-xi}) and (\ref{eq:sigma-z}) are satisfied.

Accordingly, we relax constraints (\ref{eq:node-radio-constraint}), (\ref{eq:interference-constraint}), (\ref{eq:flow-conserve101}), (\ref{eq:vc-budget}), (\ref{eq:ksi-xi}), and (\ref{eq:sigma-z}), and introduce the respective Lagrange multipliers:
\begin{align}
&\bm{\lambda}=(\lambda_{i}(n):i\in\mathcal{I},n\in\mathcal{I},\,n\neq i),\,\bm{\mu}=(\mu_i\geq 0,\,i\in\mathcal{I}) \nonumber\\
&\bm{\psi}=\big(\psi_{ij}^{f}\geq 0,\forall\,(i,j)\in\mathcal{E},f\in\mathcal{F} \big),\, \bm{\rho}=(\rho_i\geq 0:i\in\mathcal{I}) \nonumber\\
%&\bm{\rho}=(\rho_i\geq 0:i\in\mathcal{I}), \nonumber\\
&\bm{\tau}=\big(\tau_{ji}^{f}(n): i,n\in\mathcal{I}, j\in In(i),\,f\in\mathcal{F}\big),\nonumber\\
&\bm{\pi}=\big(\pi_{ij}(n): i,n\in\mathcal{I}, j\in Out(i)\big). \nonumber
\end{align}

Then, we define the (partial) Lagrangian shown in eq. (34), which is separable in user-specific components $L_i(\cdot)$, $i\in\mathcal{I}$. In each iteration $q$ of the primal-dual update, the user maximizes the Lagrange function in terms of the primal variables and uses the obtained values to update the dual variables. More specifically, each user $i\in\mathcal{I}$ solves the following problem in every iteration:
\begin{align}
&\max_{\bm{x}_i, \bm{y}_i,\bm{z}_i, \bm{\xi}_i, \bm{\sigma}_i} L_i\big(\bm{x}_i, \bm{\xi}_{i},\bm{y}_{i},\bm{z}_i,\bm{\sigma}_{i} \big) \label{eq:Lagrange-user} \\
&J_{i}^{c}(\bm{x}_i, \bm{\xi}_{i},\bm{y}_{i})+H_i(\bm{z}_i,\bm{\sigma}_{i}) - J_{i}^{s}-\beta_iD_i > 0 \\
&\sum_{n\in\mathcal{I}}y_{i}^{c}(n)\leq C_{i}^{c},\,\,\sum_{f\in\mathcal{F}}\frac{\sum_{n\in\mathcal{I}}y_{i}^{f}(n)}{C_{i}^{f}}\leq 1,\, \label{eq:Lagrange-capacity-constraint2} \\
&\sum_{f\in\mathcal{F}}\frac{\sum_{n\in\mathcal{N}}x_{ij}^{f}(n)}{C_{ij}^{f}}\leq 1,\,\, \forall j\in Out(i), \\
&0\leq x_{ij}^{f}(n)\leq C_{ij}^{f},\,n\in\mathcal{I},f\in\mathcal{F},j\in Out(i), \label{eq:Lagrange-user-constraints01} \\
&0\leq \xi_{ji}^{f}(n)\leq C_{ji}^{f},\,n\in\mathcal{I},f\in\mathcal{F},j\in Out(i), \label{eq:Lagrange-user-constraints02} \\
&0\leq y_{i}^{f}(n)\leq C_{i}^{f},\,n\in\mathcal{I},\,f\in\mathcal{F}, \label{eq:Lagrange-user-constraints03} \\
&0\leq y_{i}^{c}(n)\leq C_{i}^{c},n\in\mathcal{I}, \label{eq:Lagrange-user-constraints04} \\
&0\leq z_{ji}(n)\leq K,\,\,\,n\in\mathcal{I},j\in In(i), \label{eq:Lagrange-user-constraints}
\end{align}
where the objective $L_i(\cdot)$ is:
\begin{align}
&L_i=\log\big(J_{i}(\bm{x}_i, \bm{\xi}_{i},\bm{y}_{i})+H_i(\bm{z}_i,\bm{\sigma}_{i})-J_{i}^{s}-\beta_iD_i\big) \nonumber \\
&+\sum_{n\in\mathcal{I}}\big[\lambda_{i}(n)(y_{i}^{c}(n)+\sum_{f\in\mathcal{F}}y_{i}^{f}(n))\nonumber \\
&-\sum_{j\in Out(i)}\sum_{f\in\mathcal{F}}x_{ij}^{f}(n)(\lambda_{i}{(n)}-\lambda_{j}{(n)}) \big]-\rho_i\sum_{n\in\mathcal{I}}\sum_{j\in In(i)} z_{ji}(n) \nonumber \\
& + \sum_{j\in In(i)}\big(\rho_j\sum_{n\in\mathcal{I}}z_{ji}(n) + \sum_{n\in\mathcal{I}}\sum_{f\in\mathcal{F}}\tau_{ji}^{f}(n)\xi_{ji}^{f}(n)\big) \nonumber \\
&-\sum_{n\in\mathcal{I}} \big[ \sum_{j\in Out(i)}\big( \sum_{f\in\mathcal{F}}x_{ij}^{f}(n)\tau_{ij}^{f}(n) - \pi_{ij}(n)\sigma_{ij}(n)\big) \nonumber \\
&+\sum_{j\in In(i)}\pi_{ji}(n)z_{ji}(n)\big]-\sum_{j\in Out(i)}\mu_j\sum_{f\in\mathcal{F}}\frac{\sum_{n\in\mathcal{I}}x_{ij}^{f}(n)}{C_{ij}^{f}} \nonumber \\
&-\mu_i\big[ \sum_{j\in Out(i)}\sum_{f\in\mathcal{F}}\frac{\sum_{n\in\mathcal{I}}x_{ij}^{f}(n)}{C_{ij}^{f}} + \sum_{f\in\mathcal{F}}\frac{\sum_{n\in\mathcal{I}}y_{i}^{f}(n)}{C_{i}^{f}} \big] \nonumber \\
&-\sum_{j\in Out(i)}\sum_{f\in\mathcal{F}}\psi_{ij}^{f}\big[ \frac{\sum_{n\in\mathcal{I}}x_{ij}^{f}(n)}{C_{ij}^{f}} +\frac{\sum_{n}y_{i}^{f}(n)}{C_{i}^{f}} \big]\nonumber \\
&-\sum_{k\in \mathcal{N}^{e}(i)}\sum_{f\in\mathcal{F}}\psi_{ki}^{f}\big[\frac{\sum_{n}y_{i}^{f}(n)}{C_{i}^{f}}+ \sum_{m\in Out(i)}\frac{\sum_{n\in\mathcal{I}}x_{im}^{f}(n)}{C_{im}^{f}}\big].\nonumber
\end{align}
The solution yields the optimal (in the current iteration $q$) values $\bm{x}_{i}^{(q)}, \bm{y}_{i}^{(q)}, \bm{z}_{i}^{(q)}, \bm{\xi}_{i}^{(q)}, \bm{\sigma}_{i}^{(q)}$.

User $i$ then uses the above values of the primal variables to calculate the gradients and update the dual variables \cite{bertsekas-nedic}:
\begin{align}
\lambda_{i}^{(q+1)}(n)&=\lambda_{i}^{(q)}(n)+s^{(q)} \big[\sum_{j\in In(i)}\sum_{f\in\mathcal{F}}x_{ji}^{f\,(q)}(n)+y_{i}^{c\,(q)}(n)\nonumber \\
&+\sum_{f\in\mathcal{F}}y_{i}^{f\,(q)}(n)-\sum_{j\in Out(i)}\sum_{f\in\mathcal{F}}x_{ij}^{f\,(q)}(n)\big]\label{eq:lambda-update}
\end{align}
\begin{equation}
\tau_{ji}^{f\,(q+1)}(n)=\tau_{ji}^{f\,(q)}(n)+s^{(q)}\cdot\big(\xi_{ji}^{f\,(q)}(n)-x_{ji}^{f\,(q)}(n) \big) \label{eq:tau-update}
\end{equation}
\begin{equation}
\pi_{ij}^{(q+1)}(n)=\pi_{ij}^{(q)}(n)+s^{(q)}\cdot\big(\sigma_{ij}^{(q)}(n)-z_{ij}^{(q)}(n) \big) \label{eq:pi-update}
\end{equation}
\begin{eqnarray}
\rho_{i}^{(q+1)}&=&\big[\rho_{i}^{(q)}+s^{(q)}\big(\sum_{n\in\mathcal{I}}(\sum_{j\in In(i)}z_{ji}^{(q)}(n) \nonumber \\
&-&\sum_{j\in Out(i)}z_{ij}^{(q)}(n)) - \gamma - D_i \big)\big]^{+} \label{eq:rho-update}
\end{eqnarray}
where $[\cdot]^{+}$ denotes the projection onto the non-negative orthant and $s^{(q)}\geq 0$ is a properly selected step during iteration $q$ \cite{bertsekas-nedic}. A similar formula is also applied for updating $\mu_i\geq 0$ and $\psi_{ij}^{f}\geq 0$, based on the set of equations (\ref{eq:node-radio-constraint}) and (\ref{eq:interference-constraint}), respectively. Finally, each user passes the updated dual variables to her one-hop neighbors, who will use them to optimize the primal variables in the next iteration.

%%%%%%%%%%%%%287 \IncMargin{0.2em}
\begin{algorithm}[t]
\small
\SetKwInOut{Input}{input}\SetKwInOut{Output}{output}
\Output{$\bm{x}^{*}$, $\bm{y}^{*}$, $\bm{z}^{*}$}
\BlankLine
\nl $q \leftarrow 0$;\\%
\nl Set $\bm{x}^{(0)}$, $\bm{y}^{(0)}$, $\bm{z}^{(0)}$, $\bm{\xi}^{(0)}$, $\bm{\sigma}^{(0)}$, $\bm{\lambda}^{(0)}$, $\bm{\tau}^{(0)}$, $\bm{\pi}^{(0)}$, $\bm{\rho}^{(0)}$, $\bm{\mu}^{(0)}$, $\bm{\psi}^{(0)}$, $\epsilon$;\\%
\nl $\mathrm{conv\_flag}$ $\leftarrow$ 0; \# \emph{initialize the convergence flag} \\%
\nl \While{$\mathrm{conv\_flag}=0$}{
\nl $q \leftarrow q+1$; \\%
\nl \For {$i=1:I$}{
\nl Solve (\ref{eq:Lagrange-user}- \ref{eq:Lagrange-user-constraints}) for primal vars $\bm{x}_{i}^{(q)}, \bm{y}_{i}^{(q)}, \bm{z}_{i}^{(q)}, \bm{\xi}_{i}^{(q)}, \bm{\sigma}_{i}^{(q)}$\\%
\nl Send $x_{ij}^{f\,(q)}(n),\forall n\in\mathcal{I}\setminus\{i\}$, $f\in\mathcal{F}$, to $j\in Out(i)$;\\%
\nl Send $z_{ji}^{(q)}(n),\forall n\in\mathcal{I}\setminus\{i\}$, to $j\in In(i)$;\\%
\nl \For {$j=1:I$, $n=1:I$}{
\nl Calculate dual vars $\lambda_{i}^{(q+1)}(n), \tau_{ji}^{f\,(q+1)}(n), \pi_{ij}^{(q+1)}(n),$
$\rho_{i}^{(q+1)}, \psi_{ij}^{f\,(q+1)}, \mu_{i}^{(q+1)}$ using (\ref{eq:lambda-update}-\ref{eq:rho-update}); \\}%
\nl Send $\lambda_{i}^{(q+1)}(n),\, \tau_{ji}^{f\,(q+1)}(n),\,\mu_{i}^{(q+1)} \forall n\in\mathcal{I}\setminus\{i\}$, $f\in\mathcal{F}$, to $j\in In(i)$;\\%
\nl Send $\rho_{i}^{(q+1)},\,\pi_{ij}^{(q+1)}(n),\,\psi_{ij}^{f\,(q+1)} \forall n\in\mathcal{I}\setminus\{i\},\,f\in\mathcal{F}$, to $j\in Out(i)$;\\}%
\nl \If {$|[\lambda_{i}^{(q+1)}(n)-\lambda_{i}^{(q)}(n)]/\lambda_{i}^{(q)}(n)|<\epsilon$ \emph{and} $|[\rho_{i}^{(q+1)}-\rho_{i}^{(q)}]/\rho_{i}^{(q)}|<\epsilon$ \emph{and} $|[\pi_{ij}^{(q+1)}(n)-\pi_{ij}^{(q)}(n)]/\pi_{ij}^{(q)}(n)|<\epsilon$ \emph{and} $|[\tau_{ji}^{f\,(q+1)}(n)-\tau_{ji}^{f\,(q)}(n)]/\tau_{ji}^{f\,(q)}(n)|<\epsilon$ \emph{and} $|[\psi_{ij}^{f\,(q+1)}-\psi_{ij}^{f\,(q)}]/\psi_{ij}^{f\,(q)}|<\epsilon$ $|[\mu_{i}^{(q+1)}-\mu_{i}^{(q)}]/\mu_{i}^{(q)}|<\epsilon$
$\forall i, n \in\mathcal{I}, f\in\mathcal{F}, j\in Out(i)$}{$\mathrm{conv\_flag}\leftarrow 1$;}}%
\caption{Distributed Solution of BOP}\label{Distributed_Algo}
\end{algorithm}
\normalsize

%%%%%%%%%%%%%287 \DecMargin{1em}

%%%%----GI----Notice that we have omitted superscript time-stamp $(t)$ for all the primal and dual variables in the RHS of the above equations so as to simplify notation.

The Algorithm is executed in a synchronous fashion, which requires a common clock of all users and a small delay for message passing (circulation of the dual and primal variables). This is a reasonable assumption for small-scale crowdsourced connectivity networks in a small neighborhood. The complete algorithm is summarized in Algorithm \ref{Distributed_Algo} and is provably converging to the optimal solution.
\begin{lemma}
Algorithm \ref{Distributed_Algo} globally converges to the optimal solution ($\bm{x}^{*}, \bm{y}^{*}, \bm{z}^{*}$) of the BOP problem, under properly chosen step sizes $s^{(q)}$ for each iteration $q$.
\end{lemma}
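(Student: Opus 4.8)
The plan is to recognize Algorithm~\ref{Distributed_Algo} as a projected primal--dual (dual subgradient) scheme applied to the convex program BOP and to invoke the standard convergence theory for such methods \cite{bertsekas-nedic}. First I would establish strong duality for the partially dualized problem. By the preceding lemma the BOP objective is strictly concave and its feasible set is compact, convex, and nonempty; moreover the relaxed constraints---the node-radio constraint~(\ref{eq:node-radio-constraint}), the interference constraints~(\ref{eq:interference-constraint}), the flow-conservation equalities~(\ref{eq:flow-conserve101}), the virtual-currency budget~(\ref{eq:vc-budget}), and the consistency equalities~(\ref{eq:ksi-xi})--(\ref{eq:sigma-z})---are all affine in the primal variables. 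Hence Slater's condition, witnessed by the strictly feasible ``download-only plus reward'' point already exhibited in the proof of the preceding lemma, guarantees a zero duality gap and the existence of an optimal multiplier vector $\bm{\nu}^{\ast}$.

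Next I would verify the two structural facts that make the iteration well defined. For \emph{separability}, the auxiliary variables $\bm{\xi}_i,\bm{\sigma}_i$ and the consistency constraints~(\ref{eq:ksi-xi})--(\ref{eq:sigma-z}) decouple the objective so that the partial Lagrangian of eq.~(34) splits additively into the per-user terms $L_i$; thus the inner maximization of each user's primal block can be performed locally from quantities exchanged with one-hop neighbors only. For \emph{uniqueness of the inner maximizer}, at any fixed dual vector the strict concavity of the logarithmic term together with the compact convex local constraint set~(\ref{eq:Lagrange-capacity-constraint2})--(\ref{eq:Lagrange-user-constraints}) yields a single maximizer, which by the maximum theorem depends continuously on the duals. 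Consequently the dual function is convex and finite, and the vector of constraint residuals evaluated at the primal maximizers---exactly the bracketed quantities in~(\ref{eq:lambda-update})--(\ref{eq:rho-update})---is a subgradient of the (negative) dual function.

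With these facts the updates~(\ref{eq:lambda-update})--(\ref{eq:rho-update}) are precisely a projected subgradient step $\bm{\nu}^{(q+1)}=[\bm{\nu}^{(q)}+s^{(q)}\bm{g}^{(q)}]^{+}$, with free sign for the equality multipliers $\bm{\lambda},\bm{\tau},\bm{\pi}$ and projection onto the nonnegative orthant for $\bm{\mu},\bm{\psi},\bm{\rho}$. Because the box constraints~(\ref{eq:link-capacity-add1})--(\ref{eq:budget-constraint-new}) confine all primal iterates to a compact set, every residual---hence every subgradient $\bm{g}^{(q)}$---is uniformly bounded. I would then take a diminishing, nonsummable but square-summable step size ($\sum_q s^{(q)}=\infty$, $\sum_q (s^{(q)})^2<\infty$), which is the ``properly chosen'' rule named in the statement, and apply the classical subgradient convergence theorem to conclude $\bm{\nu}^{(q)}\to\bm{\nu}^{\ast}$.

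The hard part, and the step deserving the most care, is recovering the \emph{primal} optimum, since dual subgradient methods need not produce convergent primal iterates in general. Here strict concavity is decisive: it makes the inner maximizer a single-valued continuous map of the duals, so $\bm{\nu}^{(q)}\to\bm{\nu}^{\ast}$ transfers directly to convergence of $(\bm{x}^{(q)},\bm{y}^{(q)},\bm{z}^{(q)},\bm{\xi}^{(q)},\bm{\sigma}^{(q)})$ to the unique Lagrangian maximizer at $\bm{\nu}^{\ast}$. It then remains to check that this limit is primal feasible and optimal: vanishing of the residuals in~(\ref{eq:lambda-update})--(\ref{eq:rho-update}) at the limit, together with complementary slackness at $\bm{\nu}^{\ast}$, forces the relaxed constraints---in particular the consistency equalities $\xi_{ji}^{f}(n)=x_{ji}^{f}(n)$ and $\sigma_{ij}(n)=z_{ij}(n)$---to hold, so the recovered point is feasible for BOP and, by strong duality, attains the optimal value; uniqueness from the preceding lemma then identifies it with $(\bm{x}^{\ast},\bm{y}^{\ast},\bm{z}^{\ast})$. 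Should single-valuedness ever be delicate, the robust fallback is to run the same argument on the ergodic (running-average) primal sequence, whose convergence under bounded subgradients holds independently of uniqueness.
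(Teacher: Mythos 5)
Your proposal is correct and follows essentially the same route as the paper: both arguments rest on the concavity of the BOP objective, the boundedness of the box-constrained primal variables --- which bounds the constraint residuals used as (sub)gradients in (\ref{eq:lambda-update})--(\ref{eq:rho-update}) --- and the standard convergence theorem for dual subgradient methods under a diminishing step-size rule \cite{bertsekas-nedic}, \cite{chiang-tutorial}. The paper's own proof is a two-line sketch that stops at gradient boundedness, so your additional care with strong duality (all relaxed constraints are affine, so Slater is immediate from feasibility), with recovering the primal iterates from the converging duals via single-valuedness of the inner maximizer, and your noted fallback to ergodic primal averages when that single-valuedness is delicate (as it indeed is for the linearly entering payment variables $\bm{\sigma}_i$ and for routing splits across identical channels) are refinements of, rather than departures from, the published argument.
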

\begin{proof}
BOP has a strictly concave objective and a closed, non-empty and convex constraint set. Thus, Algorithm $1$ converges to optimal solution \cite{bertsekas-nedic} if (i) the step sequence $s^{(q)}$, $q=1,2,\ldots,$ is properly selected, and (ii) the gradients used in (\ref{eq:lambda-update})-(\ref{eq:rho-update}), are bounded. Consider user $i$, and we see that the variables $x_{ij}^{f}(n)$, $y_{i}^{f}(n)$, $y_{i}^{c}(n)$, $\xi_{ji}^{f}(n)$ are upper bounded by constraints (\ref{eq:Lagrange-user-constraints01}) - (\ref{eq:Lagrange-user-constraints04}) and the energy cost function (since it is strictly convex), and $z_{ji}(n), n\in\mathcal{I}, j\in Out(i)$ are positive and upper bounded by $K$. Hence, if we employ a diminishing step size, e.g. $s^{(q)}=(1+m)/(t+m)$ with $m\geq 0$, then the convergence is guaranteed \cite{chiang-tutorial}, \cite{bertsekas-nedic}.
%\vspace{-2mm}
\end{proof}

Notice that each user passes messages for each commodity $n\in\mathcal{I}$, only to her one-hop neighbors. Hence, the message passing overhead of the algorithm is $O(\bar{d}|\mathcal{I}||\mathcal{F}|)$, where $\bar{d}$ is the average degree of the graph $G$. However, we expect that the number of users in the group will be small (due to the need to be in proximity), hence even a complexity of $O(|\mathcal{F}||\mathcal{I}|^2)$ (assuming a fully connected graph) is affordable. \rev{Finally, note that this required exchange of messages will induce energy cost to the devices, additional to the energy expenditure due to the data downloading and relaying. The actual energy impact of this coordination depends on the technology used (e.g., how often the UPN is reconfigured, as we experimentally showed in \cite{syrivelis}) and it's study is beyond the scope of this work.}

\section{Performance Evaluation}\label{sec:numerical}

In this section we employ a representative system setup, and demonstrate how the collaborative connectivity service performs in a variety of different scenarios. The system parameters follow experimental studies \cite{huang-mobisys}, \cite{balasubra-imc}, \cite{tan-3g-evaluation}, \cite{perrucci-vtc}.

%\subsection{Simulation Setup}

\textbf{Simulation Setup}. We consider a set of $|\mathcal{I}|=6$ users randomly placed in a geographic area, and study their interactions for a time period of $T=100$ seconds. The Internet access capacity of each user depends on whether she uses a cellular LTE-A, 3G, or a Wi-Fi connection. Field experiments have measured the actual average speed to be $12.74$ Mbps for LTE-A, $4.12$ Mbps for Wi-Fi, and $1$ Mbps for 3G networks \cite{huang-mobisys}, \cite{tan-3g-evaluation}. Moreover, we assume that there are $3$ orthogonal channels for Wi-Fi which, in general, may have different capacities due to the surrounding (background) interference beyond the UPN transmissions.

The users communicate with each other using Wi-Fi Direct, and each user $i$ has one NIC, i.e., $k_i=1$. The achievable rate among two users $i$ and $j$ decreases with their distance $d_{ij}$ (in meters). In order to account for a representative setting with random channel conditions, we assume that two users separated by $1$ meter can achieve a maximum communication speed of $64$Mbps; the speed drops to $0.1$ Mbps when the distance increases to $30$ meters. The rate will be smaller if there is interference, and zero when the distance is larger than $30$ meters. Therefore, the average data rate $C_{ij}^{f}$ that can be transferred over each link $(i,j)\in\mathcal{E}$ and channel $f$ satisfies:
\begin{equation}
C_{ij}^{f} = b_{ij}^{f}\cdot100\log(1+ 0.9/d_{ij}^2)\,.
\end{equation}
where $b_{ij}^{f}\in [0.5,1]$ is a uniformly random variable modeling the effect of surrounding interference in channel $f\in\{1,2,3\}$.

\rev{
The energy consumed by a data transfer is proportional to the data volume, the transmission power and rate, and the channel conditions (e.g., due to distance and packet retransmissions \cite{perrucci-vtc}) \cite{balasubra-imc}. We use here eq. (\ref{eq:energy-equation001}), where the parameters are selected according to \cite{huang-mobisys}: the energy consumption for 3G links is twice as for LTE links, and 4 times higher than Wi-Fi. Also, uplink transmission consumes 8 times more energy in LTE and 3G, and 2 times more in Wi-Fi, than downlink transmissions. Note that for our comparative study the above relative values are adequate enough.}

Every user $i\in\mathcal{I}$ has a logarithmic utility function
\begin{equation}
U_i=\alpha_{i}\log\big(1+r_i\big),
\end{equation}
which satisfies the principle of diminishing marginal returns, and $r_i$ is given by eq. (\ref{eq:r-i-function}). Parameter $\alpha_i\in[0,1]$ captures the different communication needs of the different users. Also, the virtual currency parameters $\beta_i$, $\forall i\in\mathcal{I}$, are uniformly distributed in $(0,1]$. \rev{Finally, regarding the cellular data costs, we employ two approaches to make our study more comprehensive. First, we assume that the user's dissatisfaction can be a simple linear function of consumed data amount, and we use the representative price reported by ITU (for UK) \cite{itu-report}, i.e., $0.002\$$/Mbit. Second, we also use the psychological price given by eq. (\ref{eq:quota}), where the quotas are selected randomly from the interval $[0,5]$GBytes, unless otherwise specified. Finally, for both cases, we set $p_i=0$ for users who have an unlimited cellular data plan or using Wi-Fi connections.}

\begin{figure}[t]%
\centering
\subfigure
{\epsfig{figure=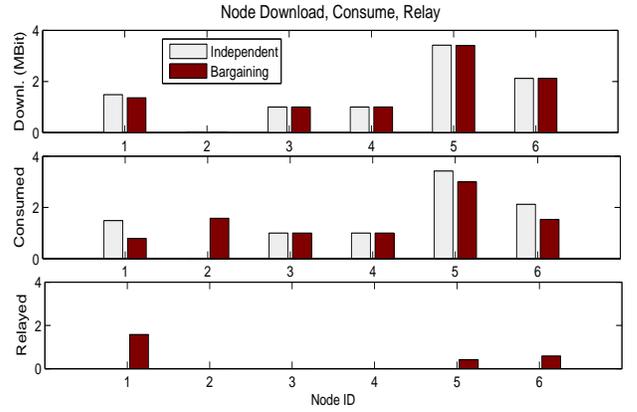, width=8.9cm,height=5.6cm}}\qquad%
\vspace{-2mm}
\caption{Comparison of independent, and bargained solution. System parameters: $\{C_{i}\}_{i}=\{9.7, 0.0, 1.0, 1.0, 4.12, 2.1\}$Mbps, $\{p_i\}_{i}=\{0.02, 0.02, 0.008, 0.001, 0.0, 0.0\}$ \$/Mbit. User $2$ has higher demand than other users, i.e., $\alpha_{i}=2,\,i\in\{1,3,4,5,6\}$ and $\alpha_2=4$. The figure shows the total amount of downloaded, consumed, and relayed data per second.}
\label{fig:ssb}
%================>>>> Energy Consumption: $\{e_{0i}\}_{i}=\{0.58, 0.0, 12.5, 12.5, 0.36, 0.36\}$Joule/Mbit,
\vspace{-2mm}
\end{figure}

\begin{figure} 
	\vspace{-1mm}
		\subfigure{\scriptsize \mytab}%
		\subfigure{
		\epsfig{figure=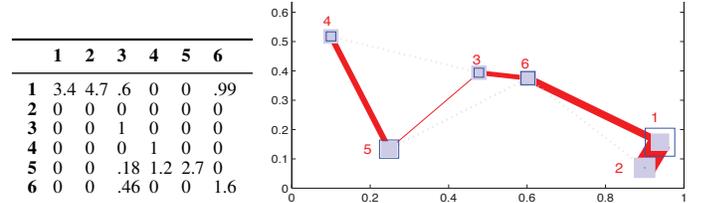,width=5.6cm,height=2.7cm}}
	\rev{\caption{Detailed example of the UPN operation under the Nash bargaining solution. Users' parameters are given in caption of Fig. \ref{fig:ssb}. In the graph on the right, blue squares represent amounts of downloaded data by each node, and light-blue shaded boxes the amounts of consumed data. Dotted lines show connectivity and red lines represent the amount of relayed data. Actual values are shown on the left table.}}
		\label{fig:UPN-snapshot}
	\vspace{-1mm}
\end{figure}

%\subsection{Numerical Results}

\subsubsection{Comparison of Bargained and Independent Solutions}
\rev{Our first goal is to study the UPN operation from the point-of-view of the user. That is, we wish to demonstrate the impact of collaboration on the amount of data that each user consumes, and on the users' payoff.} We compare the ``bargained'' payoff $J_{i}^{G}$ for each user $i$ with the ``independent'' payoff $J_{i}^{s}$. We consider a setting with $6$ users, where user $1$ has an LTE connection, user $2$ does not have Internet access, users $3$ and $4$ have 3G connections, and users $5$ and $6$ have Wi-Fi connections. The Internet access capacity and price values (linear cost is assumed here) are shown in the caption of Figure \ref{fig:ssb}. Also, user $i=2$ has higher utility ($\alpha_2=4$) compared to the other users ($\alpha_i=2,\,i\neq 2$). For simplicity, all other system parameters have been set equal for all users. The results represent the average obtained over $50$ experiments for different user locations, namely uniformly distributed in the $[0,100m]\times [0,100m]$ plane.

In Figure \ref{fig:ssb} we plot the total amount of data each user downloads, consumes, and relays for both scenarios (participation in UPN or not). We observe that users download (almost) the same amount of data in both cases, apart from user $2$ who cannot access the Internet independently. In the bargained scenario, users consume different amounts of data since some of them relay traffic for others. For example, user $1$ consumes on average $46\%$ less data in the bargained scenario compared to her standalone operation (she routes the rest $54\%$ to her neighbors). \rev{This cooperation depends heavily on user $1$'s valuation for the virtual currency, and thus for receiving UPN service in the future}. It is clear that the extent to which the users will relay traffic for others depends on the relative value of the virtual currency to their aggregate opportunity cost (not downloading for their own needs), Internet usage, and energy consumption costs. \rev{To further help the reader build intuition, we present a detailed snapshot of the UPN operation for one of the 50 experiments in Figure 5.}

%\begin{figure}[t]
%	\vspace{-1mm}
%	\begin{center}
%			\epsfig{figure=Figures/DataConsumptionComparison3.eps, width=8.1cm,height=3.3cm}
%		\caption{Service performance for users with diverse Internet access capacities. $C_{01}=C_{02}=12.7$ Mbps, $C_{03}=C_{04}=C_{05}=C_{06}=k$  Mbps. Internet access prices are identical and equal to $p=0.002$ \$/Mbit. Similarly, energy consumption for downloading is $e_{i}^{c}=0.158$ J/Mbit for each user $i$. Results are averaged over $30$ runs with uniformly distributed user locations in $[0,100m]\times [0,100m]$.}
%		\label{fig:sims-comparision}
%	\end{center}
%	\vspace{-1mm}
%\end{figure}

\begin{figure}[t]
\vspace{-1mm}
\begin{center}
\subfigure[]{
\epsfig{figure=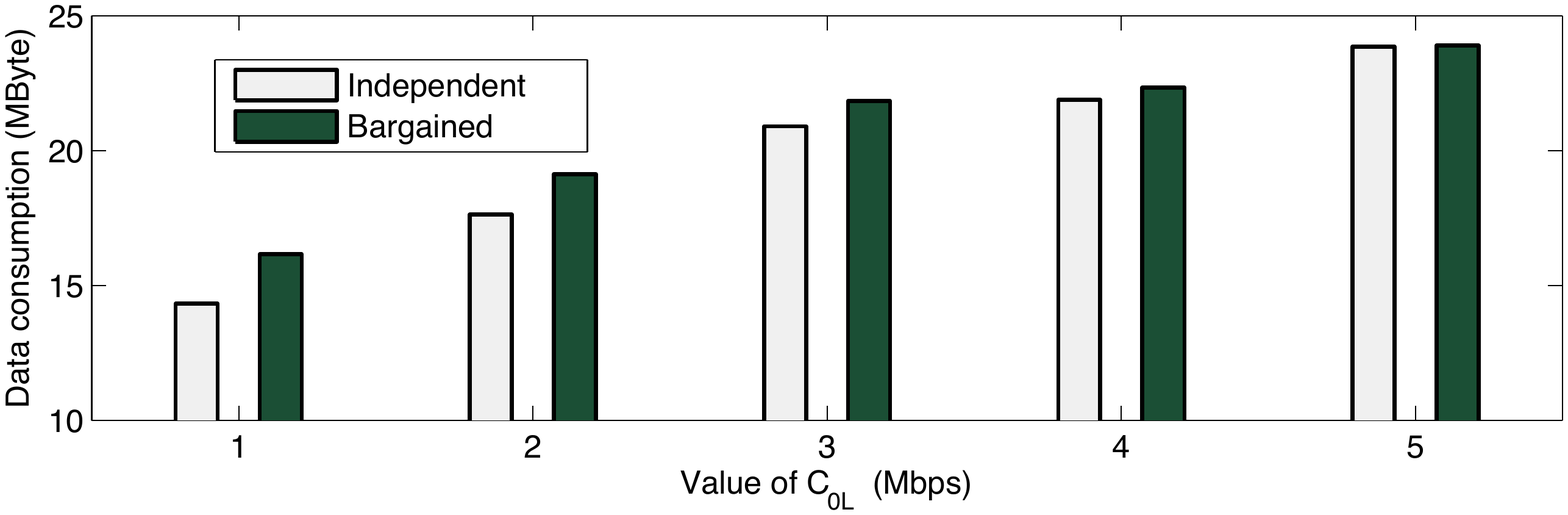, width=8.4cm,height=3.3cm}}
\subfigure[]{
\epsfig{figure=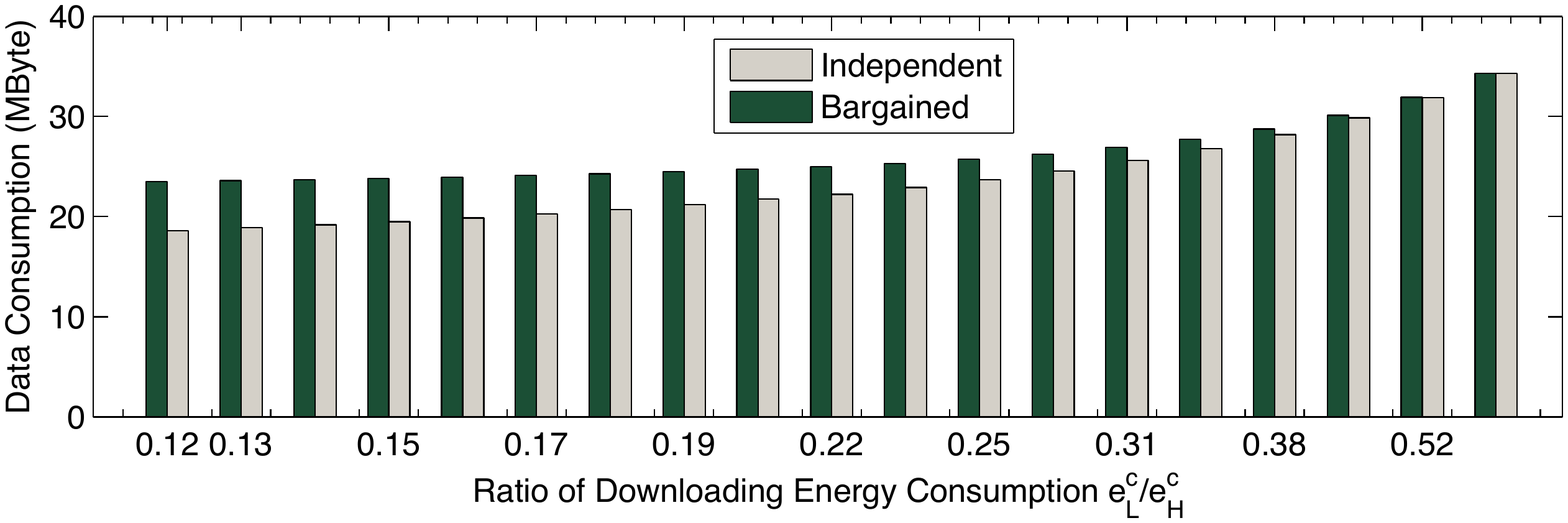, width=8.4cm,height=3.3cm}}
\subfigure[]{
\epsfig{figure=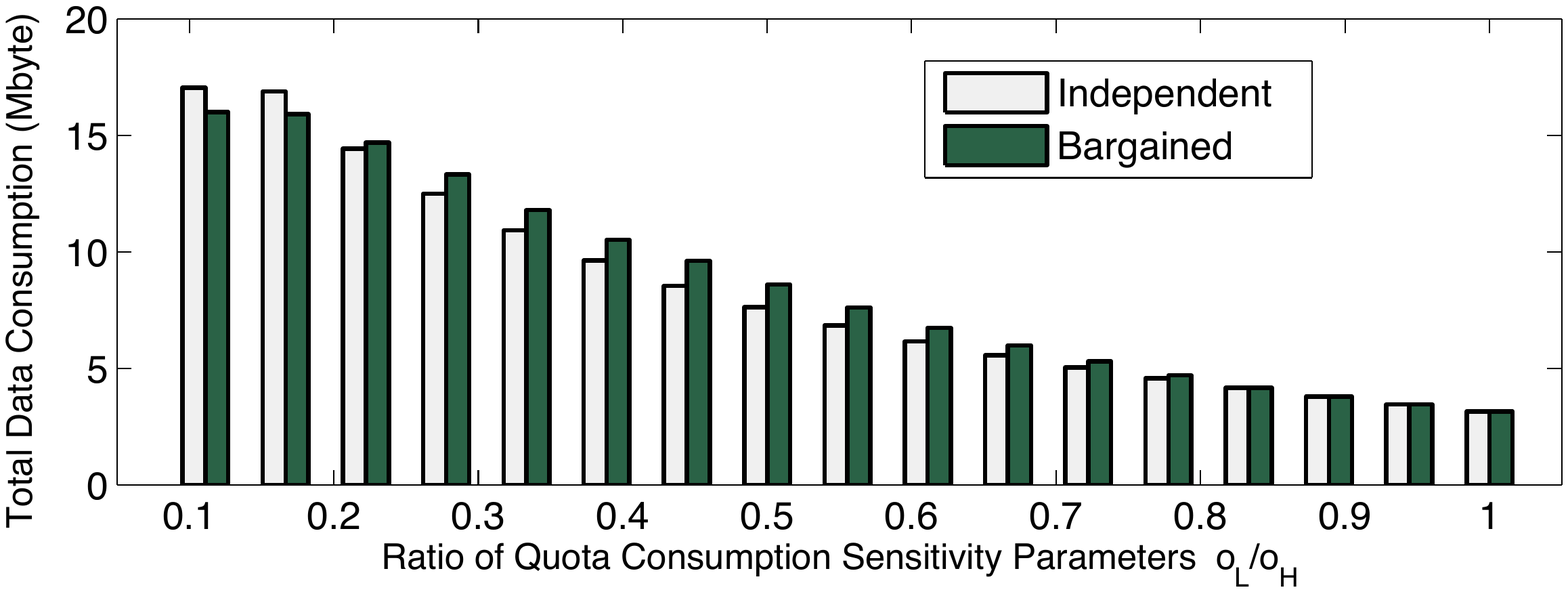, width=8.4cm,height=3.3cm}}
\caption{\rev{Impact of user diversity on collaboration benefits. All results are averaged over $30$ runs with uniformly distributed user locations in $[0,100m]\times [0,100m]$. Users are identical in all aspects, except the following parameters \textbf{(a)}: Diverse Internet access: $C_{01}=C_{02}=C_{0H}=12.7$ Mbps, $C_{03}=C_{04}=C_{05}=C_{06}=C_{0L}$ Mbps; \textbf{(b)}: Diverse energy consumption per downloaded byte: $e_{1}^{c}=e_{2}^{c}=e_{L}^{c}$, and the remaining four have the same higher value $e_{H}^{c}$; \textbf{(c)}: Diverse data plan parameters: $o_1=o_{2}=o_{L}$, and the other four have the same higher value $o_{H}$.}}
\label{fig:sims-comparision}
\end{center}
\vspace{-1mm}
\end{figure}
%Internet access prices are identical and equal to $p=0.002$ \$/Mbit
%Similarly, energy consumption for downloading is $e_{i}^{c}=0.158$ J/Mbit for each user $i$. 

\subsubsection{Impact of Users Diversity on UPN Performance} \rev{Next we investigate how the performance benefits of the UPN service depend on the users' diversity, specifically on the differences in their (i) Internet access capacity, (ii) energy consumption parameters, and (iii) data plan parameters.} Starting with case (i), we assume that two users have the same high Internet access capacity ($C_{0H}=12.7$ Mbps) and the other four identical low Internet capacity $C_{0L}$ that gradually increases from $1$ Mbps to $5$ Mbps. All other parameters are fixed and equal for all six users. In Fig. \ref{fig:sims-comparision}(a) we plot the aggregate amount of downloaded data for the bargained and independent solutions. \rev{We observe that as users become less diverse (value of $C_{0L}$ increases), the gap of total downloaded data (per time period) for the bargained solution compared to the standalone solution decreases from $29.36\%$ to almost $0\%$. We can observe a similar trend regarding the user diversity in terms of energy consumption. Namely, we consider the above setup and assume that two users have low energy consumption cost $e_{L}^{c}=0.15$ J/Mbit and the other four identical high energy consumption parameter $e_{H}^{c}=1.13$ J/Mbit which decreases gradually to $0.24$. In this case, the benefit of the service decreases from $15\%$ to almost $0\%$, as is shown in Fig. \ref{fig:sims-comparision}(b)}. The intuition is that the more diverse the Internet capacities or the energy consumption parameters of the users are, the larger the benefits from their cooperation.

\rev{Finally, we study the impact of data plan diversity in Figure \ref{fig:sims-comparision}(c). We assume that the cellular access cost is given by function $Q_i(y_{i}^{ag,c})$ (eq. (\ref{eq:quota})), and that the only difference of users is in parameter $o_i$ (similar results obtained when the diversity is in $A_i$). That is, two of them have lower value ($o_L$) than the other four users ($o_H$), and this ratio increases along the x-axis, becoming eventually $o_L/o_H=1$ (no diversity case). As it is expected, the cooperation benefit diminishes as the diversity among the psychological prices of the users decreases. Interestingly, however, for very diverse scenarios ($o_L/o_H<0.2$), the nodes download more data under the independent operation than the bargained operation. In this case, the four users with the expensive data plans find it much more beneficial to buy data (using the virtual currency) from the other two users in the bargained scenario. Correspondingly, the two users with the low-cost data plans assess the trade off between consuming data and downloading (and relaying) for their neighbors, and find the latter to be more beneficial (due to the virtual currency payments). Such a decision reduces the total data downloaded by these two users, due to the additional energy consumption when relaying data for others. This simple example reveals that the parameters of the users can have complicated impacts on the final network operation. Our algorithm ensures that the UPN operation is optimal as it maximizes the users' payoffs according to the NBS.}

\begin{figure}[t]%
\centering
\subfigure%%%%%%%%[Onloading Operation]
{\epsfig{figure=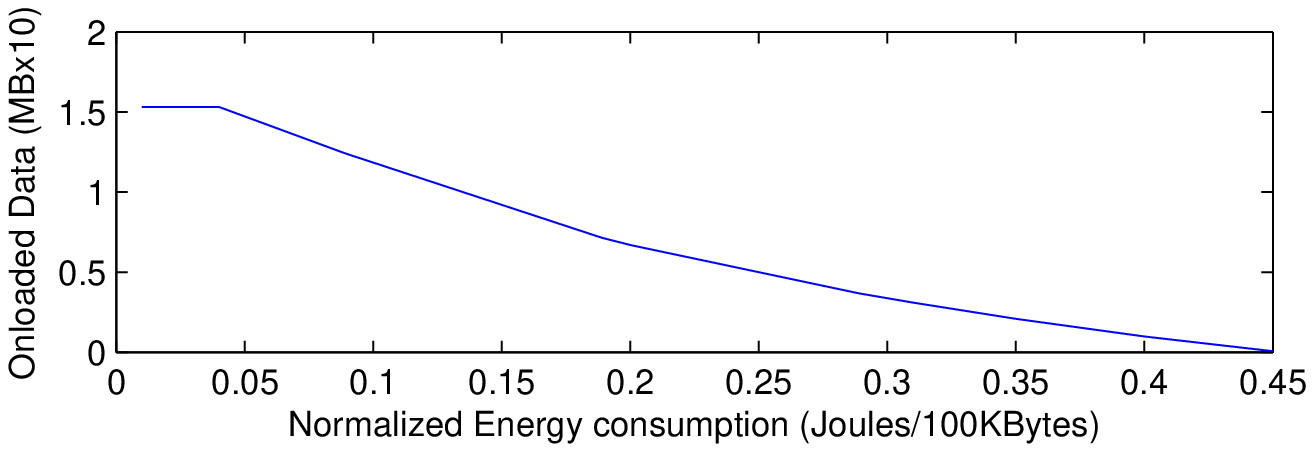,
%%28/7 width=7.6cm,height=2.6cm}}\qquad%
width=7.5cm,height=2.55cm}}\qquad%
\subfigure %%%%%%%[Offloading Operation]
{\epsfig{figure=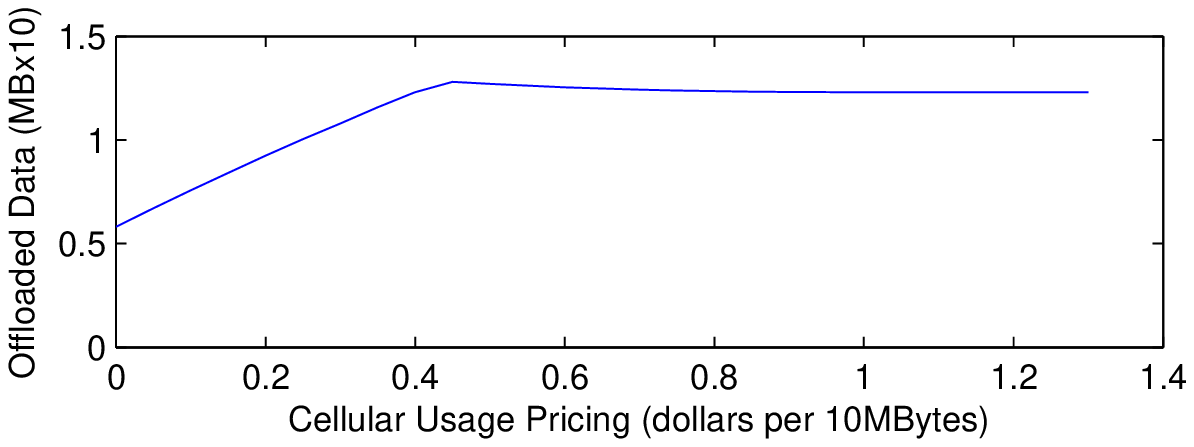,width=7.5cm,height=2.55cm}} %
%\vspace{-1mm}
\caption{Upper subfigure (Onloading): LTE-A capacity ($10$ Mbps) is $5$ times larger than the Wi-Fi capacity ($2$ Mbps), mobile data price is $p=0.002$ \$/Mbit, and the LTE link energy consumption 3-times larger than the Wi-Fi link energy consumption. Lower subfigure (Offloading): LTE capacity ($4$ Mbps) is twice the Wi-Fi capacity, and data usage price increases.}
\label{fig:offload-onload}
%\vspace{-2mm}
\end{figure}

\subsubsection{Offloading and Onloading Capabilities of UPNs}

\rev{Finally, we consider the UPN operation from the networks's point of view, and explore how the traffic is shifted among different networks due to cooperation.} Namely, the cooperating users may route data that was intended for a cellular network to a Wi-Fi network (\emph{offloading}), or the other way around (\emph{onloading}). The latter option becomes attractive when Wi-Fi links are congested and the cellular access capacity of some users is high and of low cost. For this specific experiment, we employ the setup shown in Figure \ref{fig:offloading-onloading} where two users, one with Wi-Fi Internet access (\emph{user 1}) and the other with a cellular access (\emph{user 2}), form a UPN.

First we study onloading (Figure \ref{fig:offloading-onloading} right). We assume that the cellular capacity (10 Mbps) is $5$ times larger than the Wi-Fi capacity (2 Mbps). The rest of the parameters are identical for the two users. The mobile data price is equal to $p=0.002$ \$/Mbit, and the LTE link energy consumption 3-times larger than the Wi-Fi link energy consumption. In this setting we aim to investigate how the onloading is affected by the energy consumption that the gateway node incurs. In the upper subfigure of Figure \ref{fig:offload-onload} we depict the amount of onloaded data, i.e., the data that user 2 (gateway) downloads and delivers for user 1 (client), as a function of the relaying energy consumption ($e_{21}^{f,s}$). The latter can vary if, for example, the distance among the two nodes changes. We observe that for a certain value range of $e_{21}^{f,s}$ the amount of onloaded data is maximum and constant. However, as $e_{21}^{f,s}$ increases the BOP solution yields smaller amounts of relayed data. This result verifies again that the BOP solution takes into account all the users' characteristics and balances in a fair fashion the costs and benefits of the UPN nodes.

\rev{Next we explore the offloading case (Figure \ref{fig:offloading-onloading} left).} We assume the cellular capacity is only twice of the Wi-Fi one (4 Mbps versus 2 Mbps). Having all the other parameters fixed (and equal for the two nodes), we investigate how the amount of data that user 1 (gateway) offloads for user 2 (client) varies as the Internet access price increases. As we observe in the lower subfigure in Figure \ref{fig:offload-onload}, the amount of offloaded data initially increases with the price per byte paid by the LTE user $2$. This is expected since, as the mobile Internet becomes more expensive, user $2$ prefers to pay the Wi-Fi user with virtual currency for the offloading, instead of downloading the content from her cellular link. More interestingly, after a certain point (when cellular pricing reaches $0.44$), the offloaded data decreases (compared to its maximum value) and then remains constant despite the further cellular price increase. This interesting result is due to the employed fairness criterion which determines the performance that each user will receive in the UPN in analogy to her standalone performance. Therefore, since here the standalone performance $J_{2}^{s}$ of user $2$ is reduced (i.e., due to the increasing prices user $2$ would download less data in standalone operation), the respective performance $J_{2}$ within the UPN (and hence the amount of offloaded data) first decreases and accordingly remains constant.

\section{Related Works} \label{sec:RelatedWork}

%\footnotesc{Recent technological advances, such as the HotSpot 2.0 protocol, address many of the initial security issues in this type of services \cite{cisco-2012_1}.}

One of the first UPN examples is the Wi-Fi community networks \cite{fon}. The key challenges there include security issues \cite{crowcroft} and user participation incentives \cite{efstathiou-wifi}. Similar mechanisms have been studied for ad hoc networks \cite{crowcroft-peva}, and wireless mesh networks \cite{mandayam-cooperative}. These results are not directly applicable to mobile UPNs, since they do not account for users' different types of resources nor for their data usage cost. Also, most UPN users can access the Internet without relying on others' help, while this is typically not the case for other autonomous networks. The characterization of such standalone operations is critical in determining whether a user will agree to join the service or not. This is the reason we employ the Nash bargaining solution \cite{nash-bargaining}, \cite{mazumdar}.

UPN services can be centralized (as in FON), or decentralized where users negotiate with each other. For the former, \cite{guerin-infocom} studied a pricing rule for inducing service adoption, and \cite{manshaei} analyzed the price competition among FON-like operators and conventional operators. For decentralized services, \cite{lui-mesh} and \cite{walrand-wifi} performed game-theoretic analysis to predict the prices users charge to each other. \rev{Our scheme differs in that each user can have many roles and that multiple users can concurrently collaborate. Furthermore, these prior studies did not account for the limited data quotas and the energy limitations of users. On the contrary, we propose a detailed approach for studying the impact of quota dynamics \cite{andrews-journal} on users' collaboration. We follow the same approach to qualitatively study the effects of energy consumption, since it is known that the analytical relation of energy consumption and mobile transmissions (especially in the unlicensed band) is essentially intractable, e.g., \cite{nilsson}, \cite{koutsonikolas-infocom15}, \cite{energy-serrano}.} Finally, another unique aspect of our model is that the interactions among users from heterogeneous networks (cellular and Wi-Fi), apart from offloading, enable also the onloading of Wi-Fi traffic to cellular networks \cite{laoutaris-onloading} as illustrated in Figure \ref{fig:offloading-onloading}. Such flexible cooperation framework differs from previous offloading-only architectures \cite{iosifidis-wiopt}.

In the context of mobile cooperative networks, references \cite{cool-tether} and \cite{combine-mobisys07} proposed energy-prudent architectures that aggregate the cellular bandwidth of multiple hosts to build mobile hotspots. Similarly, \cite{crowdmac} presented a scheduling scheme that allows hosts to dynamically admit client requests so as to maximize their revenue. When each user can serve both as a client and host, a decision framework should assign these roles to users, based on their residual battery energy \cite{jung-mcs10}. A similar model was proposed in \cite{sharma-aggr}. Many related works in this area assume that users have strong social ties \cite{fragouli} or they are interested in the same content \cite{markopoulou}, hence there is no need for incentive provision schemes. This assumption was relaxed in \cite{indapson-infocom14} that proposed a rewarding scheme for one-hop UPNs. \rev{Moreover, the above works did not consider heterogeneous Internet access links nor multihop data delivery solutions, as they mainly studied the collaboration among two users over single-hop connections}. Such architectures are now implementable due to software defined networking (SDN) as we demonstrated recently in \cite{syrivelis}.

The proposed Internet access and routing decisions can be supported either by distributed scheduling and routing algorithms that were proposed for ad hoc and mesh networks \cite{kodialam}, \cite{XLin}, if we assume that the devices can operate in a perfectly synchronous fashion (e.g., employing a TDM MAC protocol \cite{tdm-mac}), or they can be implemented by the typical CSMA/CA protocol which does not require tight synchronization. However, there might be a gap between the theoretical performance studied in this paper and the practical implementation of such schemes, which we intent to analyze in our future works.

%Finally, it worths mentioning that our approach is motivated by the concept of collaborative consumption, which promotes business models and systems for sharing resources. The term was introduced in $1978$ by Felson \cite{CC-felson}, and has been recently revisited in a comprehensive fashion \cite{CC-RachelBook}. Interestingly, several innovative startups \cite{opengarden}, \cite{karma}, \cite{m-87}, have recently introduced respective services for mobile users. Besides, recent measurement studies have reveled the potential gains of mobile user collaboration in cellular networks \cite{laoutaris-conext14}.

%\footnotesc{This assumption holds also for other UPN services, e.g. FON \cite{fon}. A coalitional game theoretic analysis is required in case formation of subgroups is possible. However, it can be shown that also in such cases often the grand coalition will be formed since there is no cost for using the service and hence the game is superadditive \cite{myerson-gametheory}.} 

%\input{OG_Section_Journal_Conclusions11}

\section{Conclusions and Discussion}\label{sec:conclusions}

User-provided networks take advantage of the technical capabilities of handheld user-owned devices; and connect different and possibly heterogeneous networks in a bottom up fashion. This constitutes a paradigm shift with the potential to increase the effective capacity of wireless networks by unleashing dormant network resources. One of the main challenges in UPNs is to incentivize the participation of users, on the basis of a fair resource contribution and capacity allocation. In this work, we proposed an optimization framework which maximizes the UPN efficiency, and allocates the produced capacity in a fair fashion. The mechanism is amenable to distributed implementation, and is lightweight in terms of communication overheads. We verified numerically that the system ensures a higher performance than the independent operation of the nodes, and that the UPN benefits increase when users have more diverse needs and/or resources.

Interestingly, the proposed framework constitutes a method for unifying energy and monetary costs, based on the users' needs and demands, and enables the monetization of the technical capabilities of user devices. Unlike other cooperative mechanisms that rely on auctions or centralized pricing schemes, this bargaining-based solution is self-enforcing and can be implemented in a decentralized fashion. Furthermore, the proposed framework involves interactions only between one-hop neighbors. Besides, it was made clear in the analysis that UPNs extend the offloading architectures by enabling onloading traffic from Wi-Fi to cellular networks, hence provide a flexible solution for addressing the increasing interference and collision problem in the ISM band.

Finally, it worths mentioning that our approach is motivated by the concept of collaborative consumption, which promotes business models and systems for sharing resources. The term was introduced in $1978$ by Felson \cite{CC-felson} and has been recently revisited in a comprehensive fashion \cite{CC-RachelBook}. Interestingly, several innovative startups \cite{opengarden}, \cite{karma}, \cite{m-87}, have recently introduced respective services for mobile users. Besides, recent measurement studies have reveled the potential gains of mobile user collaboration in cellular networks \cite{laoutaris-conext14}, which can nowadays be implemented due to recent technological advancements in wireless networking \cite{qualcom}, \cite{keown}, \cite{syrivelis}, \cite{cisco-2012_1}.

\section{Acknowledgments}\label{sec:acks}

This work is supported by the General Research Funds (Project No. CUHK 14202814 and 14206315) established under the University Grant Committee of the Hong Kong Special Administrative Region, China; and by the National Science Foundation, US, under Grant CNS 1527090. 

%\input{OG_Section_Journal_Reference11}

%  and remember to run:
% latex bibtex latex latex
% to resolve all references
%
% ACM needs 'a single self-contained file'!
%
%APPENDICES are optional
%\balancecolumns

%%%%\bibliographystyle{plain}
%%%%\bibliography{JOC}

\ifCLASSOPTIONcaptionsoff
  \newpage
\fi

\begin{IEEEbiography}[{\includegraphics[width=1in,height=1.45in,clip,keepaspectratio]{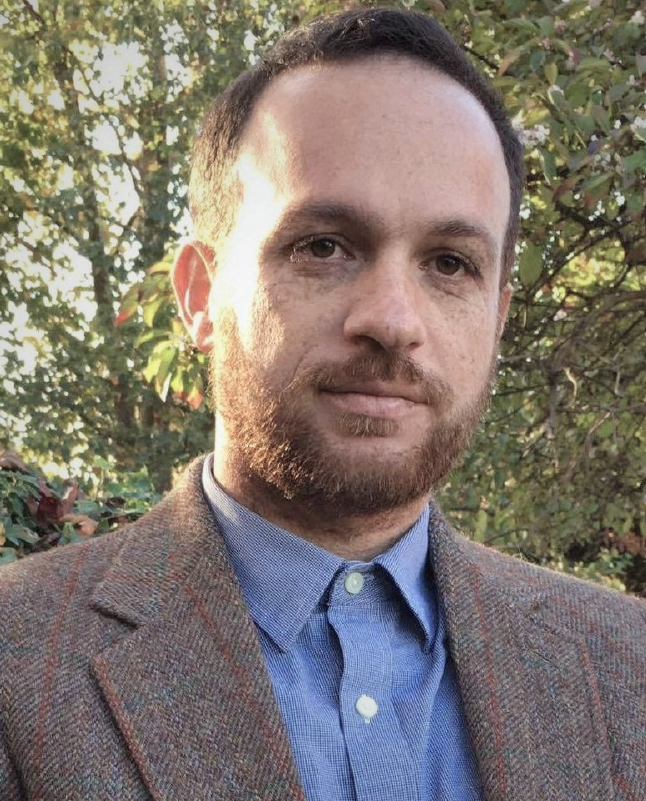}}] {George Iosifidis} received the Diploma degree in electronics and telecommunications engineering from the Greek Air Force Academy in 2000, and the M.S. and Ph.D. degrees in electrical engineering from the University of Thessaly, Greece, in 2007 and 2012, respectively. He was a Post-doctoral researcher with CERTH, Greece, and Yale University, USA. He is currently the Ussher Assistant Professor in Future Networks with the Trinity College Dublin, University of Dublin, and also a Funded Investigator with the telecommunications research centre CONNECT, Ireland. His research interests lie in the broad area of wireless network optimization and network economics.
\end{IEEEbiography}

\begin{IEEEbiography}[{\includegraphics[width=1in,height=1.25in,clip,keepaspectratio]{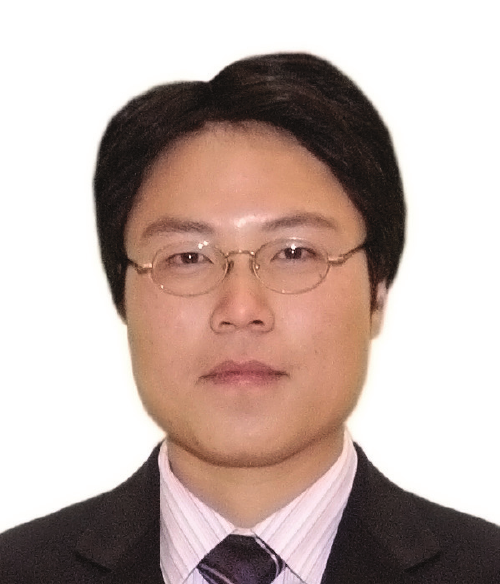}}] {Lin Gao} (S'08-M'10-SM'16) is an Associate Professor in the College of Electronic and Information Engineering at Harbin Institute of Technology, Shenzhen, China. He received M.S. and Ph.D. degrees in Electronic Engineering from Shanghai Jiao Tong University in 2006 and 2010, respectively. He was a Postdoc Research Fellow in the Network Communications and Economics Lab at The Chinese University of Hong Kong from 2010 to 2015. He received the IEEE ComSoc Asia-Pacific Outstanding Young Researcher Award in 2016. His research interests are in the interdisciplinary area combining telecommunications and microeconomics, with a particular focus on the game-theoretic and economic analysis for various communication and network scenarios, including cognitive radio networks, TV white space networks, cooperative communications, 5G communications, mobile crowd sensing, and Internet-of-Things.

\end{IEEEbiography}

\begin{IEEEbiography}[{\includegraphics[width=1in,height=1.25in,clip,keepaspectratio]{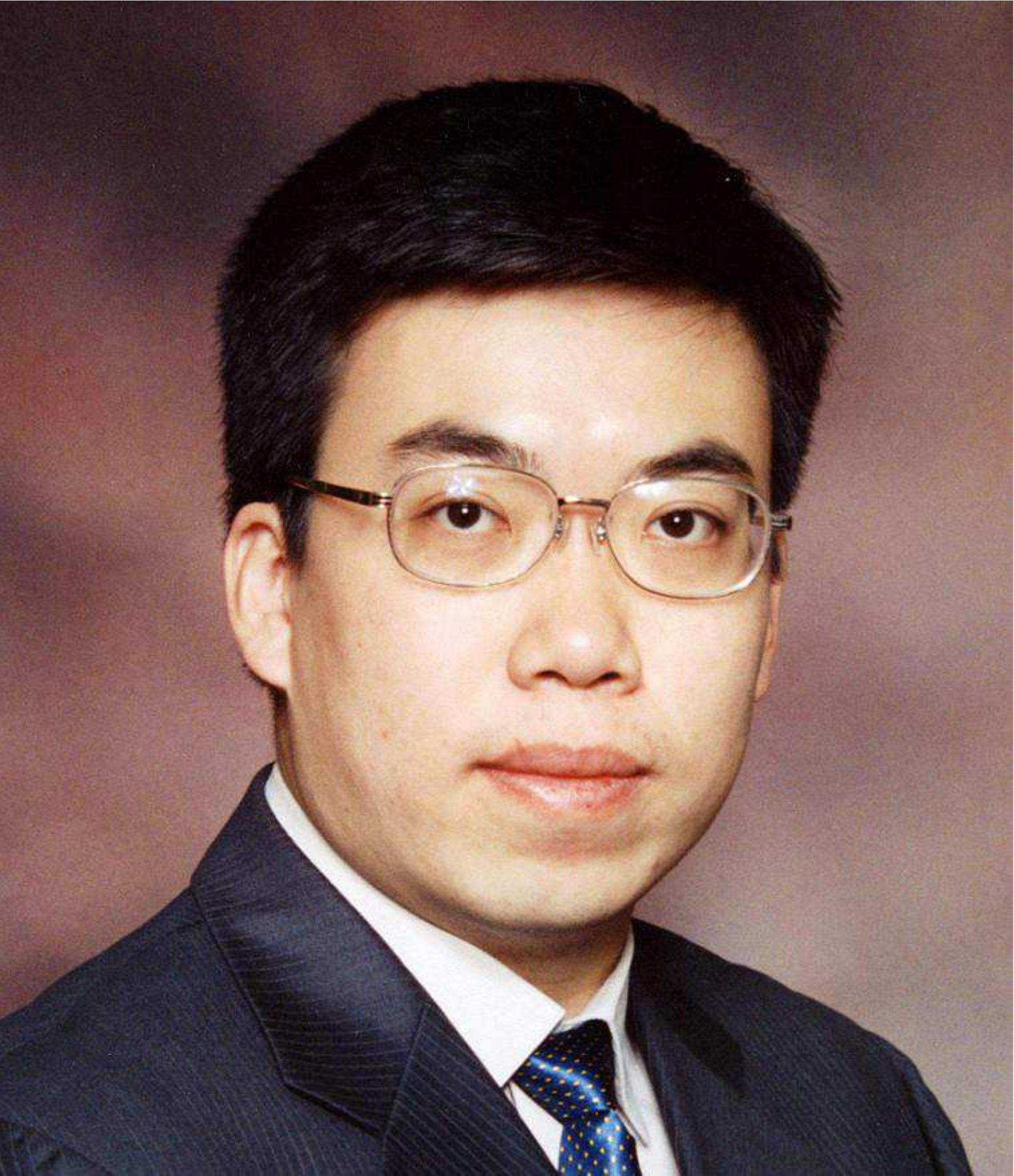}}]
	{Jianwei Huang} (S'01-M'06-SM'11-F'16) is an Associate Professor and Director of the Network Communications and Economics Lab (ncel.ie.cuhk.edu.hk), in the Department of Information Engineering at the Chinese University of Hong Kong. He received the Ph.D. degree from Northwestern University in 2005 and worked as a Postdoc Research Associate in Princeton during 2005-2007. He is the co-recipient of 8 international Best Paper Awards, including IEEE Marconi Prize Paper Award in Wireless Communications in 2011. He has co-authored six books, including the first textbook on "Wireless Network Pricing." He has served as an Associate Editor of IEEE Transactions on Cognitive Communications and Networking, IEEE Transactions on Wireless Communications, and IEEE Journal on Selected Areas in Communications - Cognitive Radio Series. He has served as the Chair of IEEE ComSoc Cognitive Network Technical Committee and IEEE ComSoc Multimedia Communications Technical Committee. He is a Fellow of IEEE, a Distinguished Lecturer of IEEE Communications Society, and a Thomson Reuters Highly Cited Researcher in Computer Science. 
\end{IEEEbiography}

\begin{IEEEbiography}[{\includegraphics[width=1in,height=1.25in,clip,keepaspectratio]{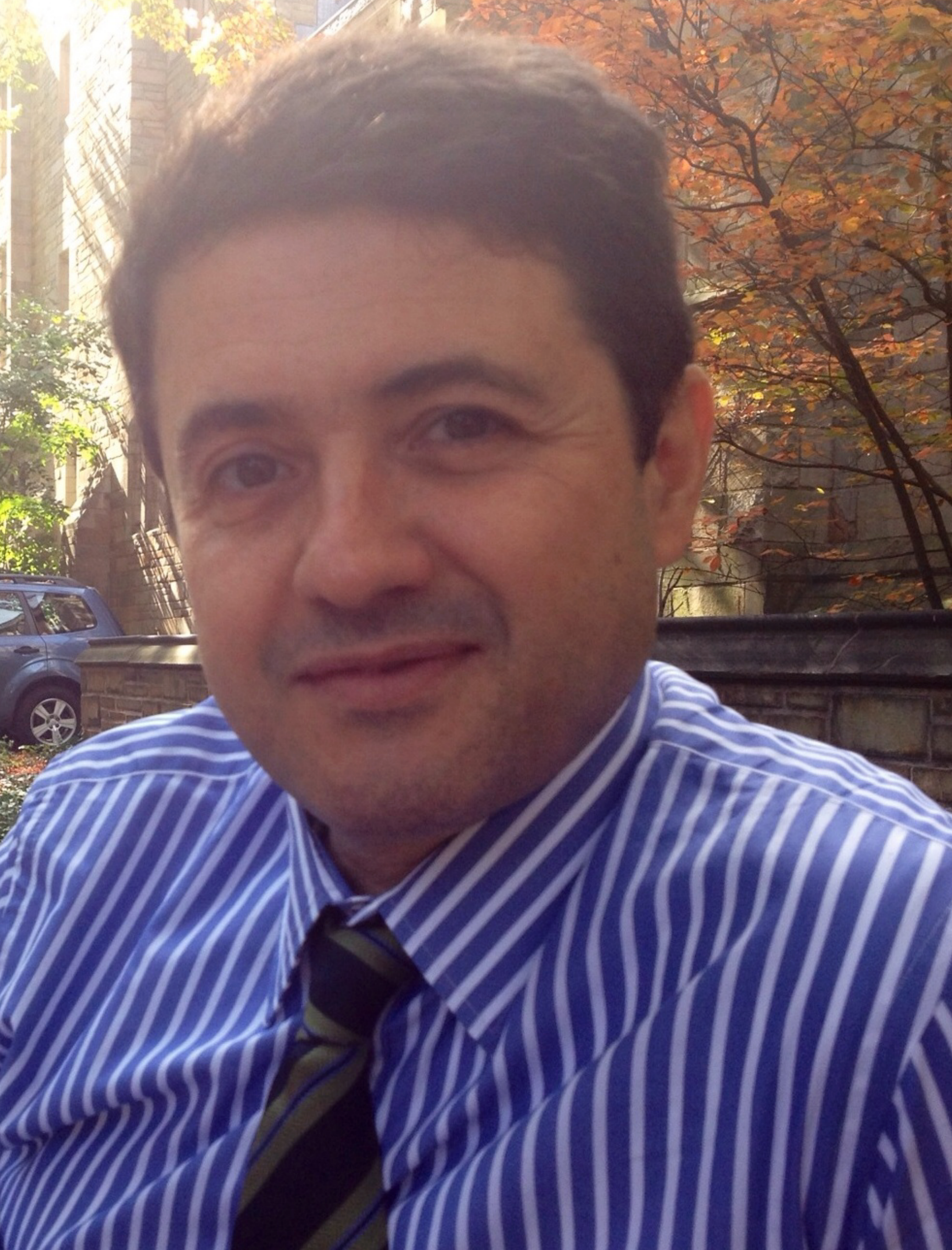}}] {Leandros Tassiulas} (S'89-M'91-SM'06-F'07) received the Ph.D. degree in electrical engineering 1085 from the University of Maryland, College Park, 1086 MD, USA, in 1991. He has held faculty positions 1087 with Polytechnic University, New York, the University of Maryland, College Park, and the University of Thessaly, Greece. He is currently the John C. Malone Professor of Electrical Engineering and member of the Institute for Network Science, Yale University, New Haven, CT, USA. His research interests are in the field of computer and communication networks with emphasis on fundamental mathematical models and algorithms of complex networks, architectures and protocols of wireless systems, sensor networks, novel Internet architectures, and experimental platforms for network research. His most notable contributions include the max-weight scheduling algorithm and the back-pressure network control policy, opportunistic scheduling in wireless, the maximum lifetime approach for wireless network energy management, and the consideration of joint access control and antenna transmission management in multiple antenna wireless systems. His research has been recognized by several awards, including the IEEE Koji Kobayashi Computer And Communications Award  (2016), the Inaugural INFOCOM 2007 Achievement Award for fundamental contributions to resource allocation in communication networks, the  INFOCOM 1994 best paper award, the National Science Foundation (NSF) Research Initiation Award (1992), the NSF CAREER Award (1995), the Office of Naval Research Young Investigator Award (1997), and the Bodossaki Foundation Award (1999).
\end{IEEEbiography}

% that's all folks
\end{document}